\def\floor#1{\left\lfloor #1 \right\rfloor}
\def\ceiling#1{\left\lceil #1 \right\rceil}
\def\alg{{\sc Alg}\xspace}
\def\opt{{\sc Opt}\xspace}
\def\alc{{\sc AllClass}\xspace}
\def\sepc{{\sc SeparateClass}\xspace}
\newtheorem{theorem}{Theorem}
\newtheorem{lemma}[theorem]{Lemma}
\begin{document}

\title{Scheduling Light-trails on WDM Rings}

\author{Soumitra~Pal \qquad Abhiram~Ranade\\
   Department of Computer Science and Engineering, \\
   Indian Institute of Technology Bombay, \\
   Powai, Mumbai 400076, India. \\
   \texttt{\{mitra,ranade\}@cse.iitb.ac.in}
}
\date{}

\maketitle

\begin{abstract}
We consider the problem of scheduling communication on optical WDM
(wavelength division multiplexing) networks using the light-trails
technology.  We seek to design scheduling algorithms such that the given
transmission requests can be scheduled using minimum number of wavelengths
(optical channels).  We provide algorithms and close lower bounds for two
versions of the problem on an $n$ processor linear array/ring network.
In the {\em stationary} version, the pattern of transmissions (given)
is assumed to not change over time.  For this, a simple lower bound
is $c$, the congestion or the maximum total traffic required to pass
through any link.  We give an algorithm that schedules the transmissions
using $O(c+\log{n})$ wavelengths.  We also show a pattern for which
$\Omega(c+\log{n}/\log\log{n})$ wavelengths are needed.  In the {\em
on-line} version, the transmissions arrive and depart dynamically,
and must be scheduled without upsetting the previously scheduled
transmissions.  For this case we give an on-line algorithm which has
competitive ratio $\Theta(\log{n})$.  We show that this is optimal
in the sense that every on-line algorithm must have competitive ratio
$\Omega(\log{n})$.  We also give an algorithm that appears to do well
in simulation (for the classes of traffic we consider), but which has
competitive ratio between $\Omega(\log^2n/\log \log{n})$ and $O(\log^2n)$.
We present detailed simulations of both our algorithms.
\end{abstract}


\section{Introduction}

Light-trails~\cite{chlamtac2003light} are considered to be an attractive
solution to the problem of bandwidth provisioning in optical networks.
One key idea in this is the use of optical shutters which are inserted
into the optical fiber, and which can be configured to either let the
optical signal from one segment of the fiber pass through or block it from
being transmitted into the next segment.  By configuring some shutters
ON (signal let through) and some OFF (signal blocked), the network can
be partitioned into subnetworks in which multiple communications can
happen in parallel on the same light wavelength.  In order to use the
network efficiently, it is important to have algorithms for controlling
the shutters.  Notice that in the ON mode, light goes through a shutter
{\em without} being first converted to an electrical signal -- this is
one of the major advantages of the light-trail technology.

In this paper we consider the simplest scenario: two fiber optic rings,
one clockwise and one anticlockwise, passing through a set of some $n$
nodes, where typically $n<20$ because of technological considerations. At
each node of a ring there are optical shutters that can either be used
to block or forward the signal on each possible wavelength.  The optical
shutters are controlled by an auxiliary network (``out-of-band channel'').
It is to be noted that this auxiliary network is typically electronic, and the
shutter switching time is of the order of milliseconds as opposed to
optical signals which have frequencies of Gigahertz.

For this setting we give three algorithms for controlling the shutters,
or {\em bandwidth provisioning}.  Our first algorithm is for stationary
traffic, i.e., the communication demands between processors are known at
the beginning and do not change over time.  Our second and third algorithm
are for dynamic traffic, i.e., transmission requests arrive and depart
in an on-line manner, and a new request must be assigned light-trails
without requiring modifications (or with only minimal modification)
to light-trails created for currently active transmissions.  For both
problems, our objective is to minimize the number of wavelengths needed
to accommodate the given traffic.  Our results are applicable to the
setting in which a fixed number of wavelengths is available as follows.
If our analysis indicates that some $\lambda$ wavelengths are needed while
only $\lambda_0$ are available, then effectively the system will have to
be slowed down by a factor $\lambda/\lambda_0$.  This is of course only
one formulation; there could be other formulations which allow requests
to be dropped and analyze what fraction of requests are served.

The input to the stationary problem is a matrix $B$, in which $B(i,j)$
gives the bandwidth demanded between nodes $i$ and $j$, expressed as a
fraction of the bandwidth supported by a single wavelength.  We give an
algorithm which schedules this traffic using $O(c+\log n)$ wavelengths,
where $ c = \max_k \sum_{i,j|i\le k <j} B(i,j)$ is the maximum total
bandwidth demand, or the {\em congestion} at any link. The congestion as
defined above is a lower bound, and so our algorithm can be seen to use
a number of wavelengths close to the optimal.  The reader may wonder why
the additive $\log{n}$ term arises in the result.  We show that there
are communication matrices $B$ for which the congestion $c$ is small,
but which yet require $\Omega(c+\log{n}/\log\log{n})$ wavelengths.
In some sense, this justifies the form of our result.

For the on-line problem, we use the notion of competitive analysis
\cite{borodin1998oca,sleator1985amortized,karlin1988competitive}.
In this, an on-line algorithm which must respond without the knowledge of
the future is evaluated by comparing its performance to that of an {\em
off-line adversary}, an algorithm which is given all the transmission
requests at the beginning.  Clearly, the off-line adversary must perform
at least as well as the best on-line algorithm.  We establish that our
first algorithm is $\Theta(\log{n})$-competitive, i.e., it requires
$\Theta(\log{n})$ times as many wavelengths as needed by the off-line
adversary.  We also prove that no on-line algorithm can do better by
showing the lower bound on the competitive ratio of any algorithm for the
problem to be $\Omega(\log n)$.  A multiplicative $\Theta(\log{n})$ factor
might be considered to be too large to be relevant for practice; however,
the experience with on-line algorithm design is that such algorithms
often give good hints for designing practical algorithms.   We also give
a second algorithm for this problem, it is in fact a simplified version
of the first.  It actually performs better than the first algorithm in
many situations; however, we can prove that its competitive ratio is
worse, between $\Omega(\log^2 n/\log\log n)$ and $O(\log^2 n)$.

That brings us to our final contribution: we simulate two algorithms based
on our on-line algorithms for some traffic models.  We compare them to
a baseline algorithm which keeps the optical shutter switched OFF only
in one node for each wavelength. Note that at least one node should
switch OFF its optical shutter otherwise light signal will interfere
with itself after traversing around the ring. We find that except for
the case of very low traffic, our algorithms are better than the baseline.
For very local traffic, our algorithms are in fact much superior.

The rest of the paper is organized as follows. We begin in
Section~\ref{sec:prev_work} by comparing our work with previous
related work.  Section~\ref{sec:snap} discusses our algorithm for the
stationary problem.  Section~\ref{sec:lb} gives an example instance of
the stationary problem where the congestion lower bound is weak.  In
Section~\ref{sec:online} we describe our two algorithms for the on-line
problem.  In Section~\ref{sec:problemlb} we show that every on-line
algorithm must have competitive ratio  $\Omega(\log{n})$.  In
Section~\ref{sec:simul} we give results of simulation of our on-line
algorithms.

\section{Previous Work}
\label{sec:prev_work}

After the light-trail technology was was introduced in
\cite{chlamtac2003light}, a variety of hardware models have emerged.
For example, \cite{gumaste2003mil} has a mesh implementation of
light-trails for general networks. The paper \cite{gumaste2004optimizing}
implements a tree-shaped variant of light-trail, called as clustered
light-trail, for general networks. The paper \cite{ye2005traffic}
describes `tunable light-trail' in which the hardware at the beginning
works just like a simple light-path but can be tuned later to act as a
light-trail. There is some preliminary work on multi-hop light-trails
\cite{multihop} in which transmissions are allowed to go through
a sequence of overlapping light-trails.  Survivability in case of
failures is considered in~\cite{balasubramanian2005ltn} by assigning
each transmission request to two disjoint light-trails.

A variety of performance objectives have been proposed.  Several
objectives are mentioned in the seminal paper \cite{gumaste2004lto}
-- to minimize total number of light-trails used, to minimize queuing
delay, to maximize network utilization etc.  Most of the work in the
literature seems to solve the problem by minimizing total number
of light-trails used \cite{fang:olt, gumaste2007hao, ayad2007eoa,
wu2006opn03}.  Though the paper \cite{gumaste2007hao} suggests that
minimizing total number of light-trails also minimizes total number
of wavelengths, it may not be always true. For example, consider a
transmission matrix in which $B(1,2)=B(3,4)=0.5$ and $B(2,3)=1$. To
minimize total number of light-trails used, we create two light-trails
on two different wavelengths. Transmission $(2,3)$ is put in one
light-trail and transmissions $(1,2)$ and $(3,4)$ are put in the other
light-trail. On the other hand, to minimize total number of wavelengths,
we put each of them in a separate light-trail on a single wavelength.
We believe that minimizing the number of light-trails is motivated by
the goal of minimizing the book-keeping and the scheduler overhead.
However, we do not think this can be more important than reducing the
number of wavelengths needed (or reducing the slowdown the system will
face if the number of wavelengths is fixed).  There are few other models
as well, e.g.  \cite{balasubramanian2005network} minimizes total number
of transmitters and receivers used in all light-trails.

The general approach followed in the literature to solve the
stationary problem is to formulate the problem as an integer linear
program (ILP) and then to solve the ILP using standard solvers. The
papers \cite{fang:olt,gumaste2007hao} give two different ILP
formulations. However, solving these ILP formulations takes prohibitive
time even with moderate problem size since the problem is NP-hard.
To reduce the time to solve the ILP, the paper~\cite{ayad2007eoa}
removed some redundant constraints from the formulation and added
some valid-inequalities to reduce the search space. However, the ILP
formulation still remains difficult to solve.

Heuristics have also been used. The paper~\cite{ayad2007eoa} solves
the problem in a general network. It first enumerates all possible
light-trails of length not exceeding a given limit. Then it creates a list
of eligible light-trails for each transmission and a list of eligible
transmissions for each light-trail.  Transmissions are allocated in an
order combining descending order of bandwidth requirement and ascending
order of number of eligible light-trails. Among the eligible light-trails
for a transmission, the one with higher number of eligible transmissions
and higher number of already allocated transmissions is given preference.
The paper~\cite{wu2006opn03} used another heuristic for the problem in
a general network. For a ring network, \cite{gumaste2007hao} used
three heuristics.

For the problem on a general network, \cite{balasubramanian2005ltn} solves
two subproblems. The first subproblem considers all possible light-trails
on all the available wavelengths as bins and packs the transmissions
into compatible bins with the objective of minimizing total number of
light-trails used. The second subproblem assigns these light-trails to
wavelengths. The first subproblem is solved using three heuristics and
the second problem is solved by converting it to a graph coloring problem
where each node corresponds to a light-trail and there is an edge between
two nodes if the corresponding light-trails conflict with each other.

For the on-line problem, a number of models are possible.  From the
point of view of the light-trail scheduler, it is best if transmissions
are not moved from one light-trail to another during execution, which
is the model we use.  It is also appropriate to allow transmissions to
be moved, with some penalty.  This is the model considered
in~\cite{gumaste2007hao, lodha2007sol}, where the goal is to minimize
the penalty, measured as the number of light-trails constructed.  The
distributions of the transmissions that arrive is also another
interesting issue.  It is appropriate to assume that the distribution
is fixed, as has been considered in many simulation studies including
our own.  For our theoretical results, however, we assume that the
transmission sequence can be arbitrary.  The work in~\cite{gumaste2007hao}
assumes that the traffic is an unknown but gradually changing
distribution.  It uses a stochastic optimization based heuristic
which is validated using simulations.  The paper \cite{ayad2007eoa}
considers a model in which transmissions arrive but do not depart.
Multi-hop problems have also been considered,
e.g.~\cite{zhang2005dynamic}.  An innovative idea to assign
transmissions to light-trails using {\em on-line auctions} has been
considered in~\cite{gumaste2006daa}.

Our problem as formulated is in fact similar to the
problem of scheduling communications on reconfigurable
bus architectures~\cite{DBLP:journals/ijcsa/WankarA09,
bondalapati1997reconfigurable, nakano1995bibliography}. Many
models of reconfigurable bus architectures have been proposed
and studied -- Reconfigurable Networks (RN)~\cite{ben1991power},
Bus Automation~\cite{rothstein1988bus}, Configurable Highly
Parallel Computer (CHiP)~\cite{snyder1982introduction}, Content
Addressable Array Parallel Processor (CAAPP)~\cite{weems1989image},
Reconfigurable Mesh (RMESH)~\cite{hao2002selection}, Reconfigurable
Buses with Shift Switching (REBSIS)~\cite{lin1995reconfigurable},
Reconfigurable Multiple Bus Machine (RMBM)~\cite{trahan1996power},
Distributed Memory Bus Computer (DMBC)~\cite{sahni1995data},
Mesh With Reconfigurable Bus ($M_r$)~\cite{rajasekaran1993mesh},
Polymorphic Processor Array (PPA)~\cite{maresca1993polymorphic},
Polymorphic Torus Network~\cite{li1989polymorphic}, Processor Array
with Reconfigurable Bus System (PARBS or PARBUS)~\cite{jang1995optimal}
and others.  In these models we have a graph in which processors are
vertices and edges are communication links, however, a processor can
choose to electrically connect (or keep separate) the communication
links incident to it.  If links are connected together (like setting
the shutter ON), the communication goes through (as well as being
read by the processor).  In this way the entire network can be made
to behave like few long or many short buses, as per the need of
the application running on the network.  Models using optical buses
have also been proposed -- Optical Communication Parallel Computer
(OCPC)~\cite{geréb1992efficient}, Array with Reconfigurable Optical Bus
(AROB)~\cite{pavel1996matrix}, Linear Array with Reconfigurable Pipelined
Bus System (LARPBS)~\cite{pan1998linear} and so on. Though these models
use optical signals instead of electrical signals for communication,
at an abstract level they are equivalent with PARBS and its variants. So
we use the generic term \emph{reconfigurable bus system} to denote all
the models mentioned above.

At an abstract level, the reconfigurable bus system is similar to our
light-trail model, as both models use controllable switches to dynamically
reconfigure a bus into multiple subbuses.  In both models, changing the
state of the switch takes very long as compared to the data rates on the
buses.  However, typically, reconfigurable bus systems have only one bus,
rather than allowing multiple wavelengths like the light-trail model.
A second difference is in the context in which the two models have been
studied.  The light-trail model has been studied more by the optical
network community, and the focus has been how to schedule relatively long
duration communication requests (connection based) without having any
graphical regularity.  Reconfigurable bus systems have been studied more
in the context of parallel computing, and the analyses have been more of
entire algorithms running on them.  These analyses typically concern short
messages and the communication patterns are often regular, such as those
arising in finding maximum/OR/XOR of numbers~\cite{miller1993parallel,
li1989polymorphic}, matrix multiplication~\cite{li1999parallel},
prefix computation~\cite{miller1993parallel, nakano1995prefix},
problems on graphs~\cite{subbaraman1993list, wang2007efficient},
sorting~\cite{pan1998efficient} and so on.  PRAM simulation on
reconfigurable bus~\cite{wang1990two,li2000efficient}, particularly in
the case of randomized assignment of shared memory cells, generates
random communication patterns. However, because these patterns are
drawn from a uniform distribution, they end up being quite regular
(and much of the analysis is to find regular patterns that are
supersets of what is required).  So even this work does not consider
truly arbitrary/irregular patterns which are our prime interest, for
the on-line as well as off-line (stationary) scenarios.\footnote{It is
interesting to note that the communication patterns for PRAM simulation
are uniformly random across the network because the PRAM address space
is {\em hashed}, i.e., distributed randomly.  Hashing has the effect
of converting possibly local communication going a short distance to
a random communication which most likely goes long distance.  Such a
strategy is inherently wasteful in utilization of bandwidth.  It seems
much better to directly deal with the arbitrary communication pattern
which arises in PRAM simulation in the first place.}

\subsection{Remarks}

As may be seen, there are a number of dimensions along which the work in
the literature may be classified: the network configuration, the kind of
problems attempted, and the solution approach.  Network configurations
starting from simple linear array/rings~\cite{pan1998efficient,
gumaste2007hao, lodha2007sol} to full structured/unstructured
networks~\cite{subbaraman1993list, balasubramanian2005ltn, fang:olt,
ayad2007eoa, wu2006opn03, zhang2005dynamic} have been considered, both
in the optical communication literature as well as the reconfigurable
bus literature.  The stationary problem as well as the dynamic problem has
been considered, with additional minor variations in the models.  Finally,
three solution approaches can be identified.  First is the approach
in which scheduling is done using exact solutions of Integer Linear
Programs \cite{fang:olt, gumaste2007hao, ayad2007eoa}.  This is useful
for very small problems.  For larger problems, using the second approach,
a variety of heuristics have been used \cite{balasubramanian2005ltn,
gumaste2007hao, ayad2007eoa, wu2006opn03}.  The evaluation of the
scheduling algorithms has been done primarily using simulations. The
third approach could be theoretical. However, except for some work
related to random communication patterns~\cite{rajasekaran1993mesh,
suel1994routing, rajasekaran1997sorting}, we see no theoretical analysis
of the performance of the scheduling algorithms.

In contrast, our main contribution is theoretical.  We give algorithms
with {\em provable} bounds on performance, both for the stationary
and the on-line case. Our work uses the competitive analysis approach
\cite{borodin1998oca} for the on-line problem. We use techniques
of approximation algorithms to solve the stationary problem. To our
knowledge, this competitive analysis and approximation algorithm approach
to solve the light-trail scheduling problems has not been used in the
literature. We also give simulation results for the on-line algorithms.

\section{The Stationary Problem}
\label{sec:snap}

In this section, instead of considering two unidirectional rings,
we consider a linear array of $n$ nodes, numbered~0 to $n-1$. The
link between the two consecutive nodes $i$ and $i+1$ is numbered $i$.
Communication is considered undirected.  This simplifies the discussion;
it should be immediately obvious that all results directly carry over
to two directed rings mentioned in the introduction.

In WDM, the physical optic fiber carrying signals of $k$ different
wavelengths from node $0$ through node $n-1$ is logically thought of as
$k$ independent parallel fibers each carrying signals of a single
wavelength. On a light-trail based WDM network, each of these logical
fibers is broken into segments by suitably setting the shutter at each
node. For each segment, only the nodes at the endpoints have their
shutters OFF; other nodes have their shutters ON. Note that each node
has a separate shutter for each wavelength and the shutters for all
wavelengths at nodes $0$ and $n-1$ are always OFF. The segment between
two OFF shutters is a light-trail. A transmission from $i$ to $j$ can
be assigned to a light-trail $L$ only if $u\le i<j \le v$ where $u,v$
are the end nodes of the light-trail. Further the sum of the required
bandwidths of all transmissions assigned to any single light-trail must
not exceed the capacity of a wavelength.

The input for the stationary problem is a matrix $B$ with $B(i,j)$
denoting the bandwidth requirement for the transmission from node $i$
to node $j$, without loss of generality, as a fraction of the bandwidth
capacity of a single wavelength. Capacity of each wavelength is $1$. The
goal is to schedule these in minimum number of wavelengths $w$. The
output must give $w$ as well as the light-trails used on each wavelength
and the mapping of each transmission to a light-trail that serves it.

It will be convenient to represent/visualize schedules geometrically.  We
will use the $x$ axis to represent our processor array, with processors at
integer points and the $y$ axis to represent the wavelengths numbered~$0,
1, 2,$ and so on.  The region bordered by $y=k$ and $y=k+1$ will be used
to depict the transmissions assigned to the wavelength numbered~$k$.
The region will be partitioned with vertical lines at the nodes where
the shutters are OFF. Each of the rectangular parts in the partition
represents a light-trail created on the corresponding wavelength. A
transmission from node $i$ to node $j$ having bandwidth requirement $b$
will be denoted as $[i,j]$ and represented as a rectangle of length $j-i$
and height $b$ located horizontally in the region between $x=i$ and
$x=j$ and vertically within the region corresponding to the light-trail
in which it is scheduled.  We will also use the terms \emph{length},
\emph{extent}, and \emph{height} of transmission $[i,j]$ to mean $j-i$,
the interval $[i,j]$, and $b$ respectively. Unless there is ambiguity in
the context we will also use $[i,j]$ to denote a light-trail with end
nodes $i$ and $j$. Similarly we will also use the terms \emph{length},
and \emph{extent} of light-trail $[i,j]$ to mean $j-i$ and the interval
$[i,j]$ respectively. All light-trails have height $1$.

Fig.~\ref{fig:ltrpinst} shows possible light-trail configurations and
optimum solutions for an example instance of the stationary problem where
the transmission matrix $B$ has non-zero entries $B(0,1)= B(1,2)=0.6,
B(0,2)=0.4$ only. There are two possible ways of creating light-trails
on a wavelength, depending on whether the shutter at node $1$ is set ON
or OFF. Note that the shutters at node $0$ and $2$ are always OFF. In
part (a) we show the first case where one wavelength (numbered~$0$)
has only one light-trail $[0,2]$, i.e., the shutter at node $1$ is ON.
Transmissions $[0,1]$ and $[0,2]$ are assigned to the light-trail. This
assignment is valid because the total bandwidth requirement of the
two transmissions does not exceed $1$.  However we can not also assign
transmission $[1,2]$ to the same light-trail because the total bandwidth
requirement of all three transmissions exceeds $1$.  So we need another
wavelength (numbered~$1$) to assign the transmission $[1,2]$.  The second
case is shown in part (b) where an wavelength (numbered~$0$) is divided
into two light-trails $[0,1]$ and $[1,2]$ by setting the shutter at
node $1$ OFF. The assignment of the transmissions to the light-trails,
as shown, is valid because for each of the two light-trails the total
bandwidth requirement of the transmission assigned in it does not exceed
$1$. However, we can not assign the transmission $[0,2]$ to either of the
light-trails because a transmission can not go across a OFF shutter. So we
need another wavelength (numbered~$1$) with a single light-trail $[0,2]$
to assign the transmission $[0,2]$. In both cases, we need at least $2$
wavelengths. Hence the optimal solution takes $2$ wavelengths.

\begin{figure}[h!tb]
\centering
\includegraphics[width=0.9\textwidth]{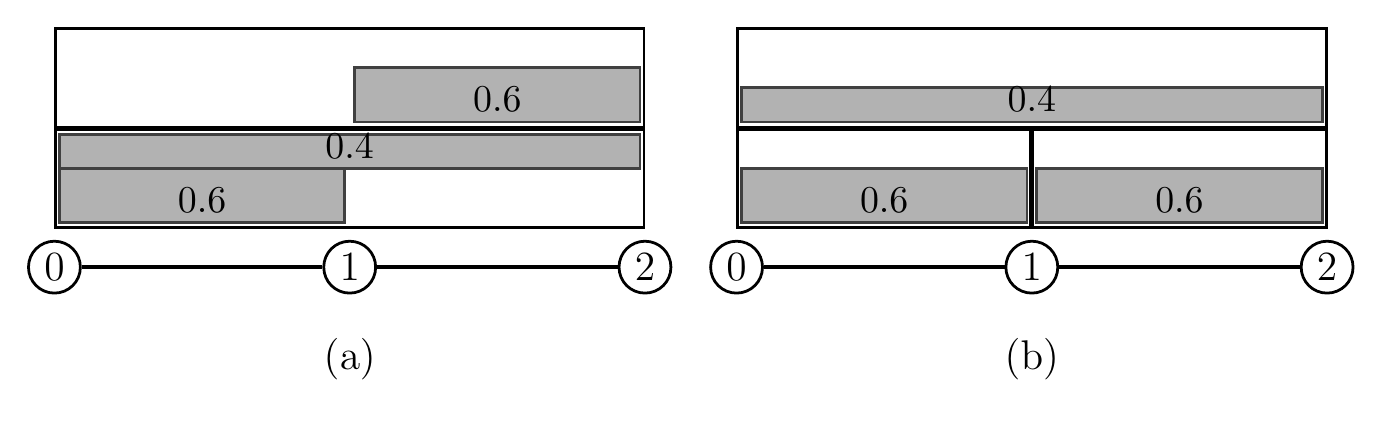}
\caption{Two optimum solutions of an instance of the stationary problem on
a linear array network with $n=3$ nodes}
\label{fig:ltrpinst}
\end{figure}

It is customary to consider two problem variations: {\em non-splittable},
in which a transmission must be assigned to a single light-trail, and
{\em splittable}, in which a transmission can be split into two or more
transmissions by dividing up the bandwidth requirement, and each of them
can be assigned to different light-trails. Note that when a transmission
is split into multiple transmissions, the length and extent remains same,
only the height is divided. Our results hold for both variations.

Note that the Bin Packing problem which is NP-hard, is a special case
of the stationary problem where each item corresponds to a transmission
from node $0$ to node $n-1$ and each bin corresponds to a light-trail
(and to a wavelength too because each light-trail completely occupies
a wavelength).  Thus the non-splittable stationary problem is NP-hard.
We do not know whether the splittable problem is also NP-hard.

We will use $c_l(S)$ to denote the congestion induced on a link $l$
by a set $S$ of transmissions.  This is simply the total bandwidth
requirement of those transmissions from $S$ requiring to cross link $l$.
Clearly $c(S)=\max_l c_l(S)$, the maximum congestion over all links,
is a lower bound on the number of wavelengths needed.  Finally if $t$ is
a transmission, then we abuse notation to write $c_l(t),c(t)$, instead
of $c_l(\{t\}),c(\{t\})$, for the congestion contributed by $t$ only,
which is equal to the bandwidth requirement of $t$. Let $R$ be the set
of all transmissions of an instance of the stationary problem. We will
use $c$ to denote the overall congestion $c(R)$.

\subsection{Algorithm Overview}

Getting an algorithm which requires only $O(c\log n)$ wavelengths is easy.
If $c_{n/2}$ denotes the congestion of the link between node $n/2$ and
node $n/2+1$, then the transmissions crossing this link can be scheduled
in $\ceiling{c}\ge \ceiling{c_{n/2}}$ wavelengths for the splittable
case, and twice that many for the non-splittable case (using ideas from
Bin Packing~\cite{coffmanjr1996aab}).  The remaining transmissions do
not cross the middle link, and hence can be scheduled by separately
solving two subproblems, one for the transmissions on each half of the
array. The two subproblems can share the wavelengths.  If $\lambda(n,c)$
denotes the number of wavelengths used for scheduling transmissions
of congestion at most $c$ in a linear array of $n$ nodes, we have the
recurrence $\lambda(n,c)=O(\ceiling{c})+\lambda(c,n/2)$.  Clearly this
solves to $O(c\log{n})$.  Getting a solution using $O(c+\log{n})$
wavelengths requires a different divide and conquer approach.

The first observation behind our algorithm is that it is relatively
easy to get a good schedule if all the transmissions had the same length
(see Section~\ref{ssec:conquer}).  So we  divide the transmissions into
classes based on their lengths, then schedule each class separately and
finally merge the schedules.  Naively scheduling and merging each class
would give us an $O(c\log{n})$ wavelength algorithm; with some care we do
get the required $O(c+\log{n})$ wavelength algorithm.  Our algorithm is:
\begin{enumerate}
\item {\em Partition into classes.} Say a transmission belongs to
 class $i$ if its length is between $2^{i-1}$ (exclusive) and $2^i$
 (inclusive).  Let $R_i$ denote the set of transmissions of class $i$,
 for $i=0$ to $\ceiling{\log_2(n-1)}$.
\item {\em Schedule transmissions of each class separately.} It will
 be seen that each class can be scheduled efficiently, i.e. using
 $O(1+c(R_i))$ wavelengths.
\item {\em Merge the schedules of different classes.} We do not simply
 collect together the schedules constructed for the different
 classes, but do need to mix them together, as will be seen.
\end{enumerate}

Scheduling classes $R_0,R_1$ is easy.  Note that each transmission in
$R_0$ has length $1$. So they can be assigned to light-trails created
by simply putting shutters OFF at every node on all the wavelengths that
are to be used.   Now for a fixed $l$ consider the light-trails $[l,l+1]$
on all the wavelengths. Each of these light-trails can be thought of as
a bin in which the transmissions $[l,l+1]$ are to be assigned. Clearly,
$\ceiling{c_l(R_0)}$ light-trails will suffice for the splittable case,
and twice that many for the non-splittable case (using ideas from Bin
Packing~\cite{coffmanjr1996aab}). Since the light-trails for different
$l$ do not overlap, they can be on the same wavelength. So $\max_l
O(\ceiling{c_l(R_0)}) = O(\ceiling{c(R_0)})$ wavelengths will suffice.
Transmissions in $R_1$ have length $2$. So they can be assigned
to light-trails created on two sets of wavelengths -- one having
shutters OFF at even nodes and the other having shutters OFF at odd
nodes. Transmissions starting at an even (odd) node are assigned to a
light-trail on a wavelength of the first (second) set. Using a argument
similar for the transmissions in $R_0$, we can show that each of these
sets require $O(\ceiling{c(R_1)})$ wavelengths.  So for the rest of this
paper we only consider classes $2$ and larger.

\subsection{Schedule Class $i\ge 2$}
\label{ssec:conquer}

It seems reasonable that if the class $R_i$ is further split into
subclasses each of which has $O(1)$ congestion, then the subclasses could
be scheduled using $O(1)$ wavelengths.  This intuition is incorrect for
an arbitrary collection of transmissions with congestion $O(1)$, as will be
seen in Section~\ref{sec:lb}.  However, the intuition is correct when
the transmissions have nearly the same length, as they do when taken from
any single $R_i$.

\begin{lemma}
There exists a polynomial time procedure to partition $R_i$ into
sets $S_1,S_2,\ldots, S_k$ where $k \le \ceiling{c(R_i)}$  such that
\emph{(i)}~$c(S_j) < 4$ for all $j$, and \emph{(ii)}~if a transmission in
$S_j$ uses link $l$ then $\ceiling{c_l(R_i)}  \ge j$.  \label{lem:breakup}
\end{lemma}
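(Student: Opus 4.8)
The plan is to build the sets $S_1,\dots,S_k$ greedily by processing the transmissions of $R_i$ in a carefully chosen order, and to use the near-uniform lengths crucially. Recall that every transmission in $R_i$ has length strictly between $2^{i-1}$ and $2^i$; in particular all of them are ``long'' at roughly the same scale. The key structural fact I would exploit is that on any link $l$, if we look only at transmissions of $R_i$ that use $l$, their left endpoints all lie within an interval of length $2^i$ to the left of $l$. This means the interval graph induced by $R_i$ has a very restricted overlap structure: it is essentially like an interval graph of intervals of nearly equal length, where any clique is ``spatially concentrated.''

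First I would sort the transmissions of $R_i$ by left endpoint (breaking ties by right endpoint). Then I would assign each transmission $t$, in this order, to the set $S_j$ with the smallest index $j$ such that adding $t$ keeps $c(S_j) < 4$. Condition (i) is then immediate by construction, provided such a $j$ always exists with $j \le \ceiling{c(R_i)}$. The heart of the argument is to show two things simultaneously: (a) the process never needs more than $\ceiling{c(R_i)}$ sets, and (b) the assignment satisfies (ii), i.e., whenever $t \in S_j$ uses link $l$ we have $\ceiling{c_l(R_i)} \ge j$.

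For (a) and (ii), here is the mechanism I expect to work. Suppose transmission $t$, using link $l$, is placed in $S_j$ with $j \ge 2$. Then for each $j' < j$, the set $S_{j'}$ already contained a transmission $t'$ such that $c(S_{j'} \cup \{t\}) \ge 4$; since congestion is realized at some link, and $t$ together with $t'$ and whatever else is in $S_{j'}$ crowds some link to total height $\ge 4$, I want to argue that this ``blocking'' happens at a link close to $l$ — within distance $2^i$ — because all the transmissions involved are long and were inserted before $t$ (so their left endpoints are $\le$ that of $t$) and they overlap $t$. The uniform-length property should force all these blocking transmissions, across all $j' < j$, to pile up on a common link near $l$, so that $c_{l'}(R_i) \ge$ something like $j$ (perhaps $j-1$ plus a contribution, handled by the factor-$4$ slack and the ceiling), for some link $l'$ with $c_{l'}(R_i)$ comparable to $c_l(R_i)$. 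Getting the constants to line up — why $4$ and not $2$ or $3$ — is exactly where the argument must be done carefully: the factor of $2$ in lengths ($2^{i-1}$ vs $2^i$) means a ``blocking witness'' for $S_{j'}$ near $t$'s left endpoint and one near $t$'s right endpoint might live on different links, and one needs enough slack to combine the counting. I anticipate that the clean statement is: for each $j' < j$ there is a link $l_{j'}$ within the extent of $t$ on which $S_{j'}$ already has congestion $> 4 - c(t) \ge 3$, hence $\ge$ a fixed fraction, and then a pigeonhole/averaging over the $\le 2^i$ links in $t$'s extent, combined with the fact that $R_i$-transmissions spanning any of these links also span a common nearby link, yields $c_{l'}(R_i) \ge j$ for a suitable $l'$ used by $t$.

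The main obstacle, then, is precisely this geometric bookkeeping: converting ``each earlier set $S_{j'}$ is blocked somewhere in $t$'s extent'' into ``some single link used by $t$ sees congestion $\ge j$ in $R_i$.'' The factor $4$ (rather than the naive $1$) is the budget that makes this conversion go through — roughly, two factors of $2$, one for the non-uniformity of lengths within a class and one for the ceiling/rounding in relating congestion at nearby links. Everything else — polynomiality (the procedure is a single sorted pass with a linear scan over at most $\ceiling{c(R_i)}$ candidate sets per transmission), condition (i), and the bound $k \le \ceiling{c(R_i)}$ — follows once this core lemma about blocking locations is established.
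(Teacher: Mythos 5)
Your plan has an acknowledged gap at exactly the step that matters, and moreover the first-fit scheme you propose cannot satisfy condition (ii) as stated. Condition (ii) is a guarantee at \emph{every} link $l$ used by a transmission of $S_j$, not just at some link. Under first-fit, a transmission $t=[a,b]$ can be pushed into $S_j$ for large $j$ solely because the sets $S_1,\dots,S_{j-1}$ are crowded near $a$; nothing prevents the link $b-1$ from being used by no transmission of $R_i$ other than $t$ itself, in which case $\ceiling{c_{b-1}(R_i)}=1<j$ and (ii) fails. The averaging/pigeonhole step you hope for can at best produce \emph{one} heavily congested link inside $t$'s extent (which might salvage the bound on $k$ up to constants), but it cannot give the per-link statement, and the per-link statement is what Lemma~\ref{lem:trim} actually consumes downstream.

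The paper's construction is organized differently precisely to get this invariant. It peels off $S_1,S_2,\dots$ iteratively: $S_j$ is built from the remaining transmissions $T_j$ by a left-to-right sweep over links that greedily picks transmissions until at least unit congestion has been removed at each link (or the link is exhausted), followed by a move-back step that returns transmissions to $T_j$ as long as this ``at least unit congestion removed at every nonempty link'' property is preserved. Condition (ii) is then immediate: if some $t\in S_j$ still uses link $l$, each of the $j-1$ earlier rounds removed at least unit congestion at $l$, so $c_l(R_i)>j-1$, and the bound $k\le\ceiling{c(R_i)}$ follows. The bound $c(S_j)<4$ comes from a separate ``sweet spot'' argument: after move-back, every surviving $t$ has a witness link where $c_l(S_j)<1+c(t)\le 2$, and splitting the transmissions through an arbitrary link $x$ according to whether their witness lies to the left or right of $x$ bounds $c_x(S_j)$ by $2+2$. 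Notably, this proof never uses the fact that transmissions in $R_i$ have comparable lengths --- it works for arbitrary intervals --- whereas your argument leans heavily on that property; the uniform-length assumption is only needed later, in Lemma~\ref{lem:construct}.
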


\begin{proof}
We start with $T_1=R_i$, and in general given $T_j$ we pick a subset
of transmission $S_j$ from $T_j$ using a procedure described below and
repeat with the remaining transmissions $T_{j+1}=T_j \setminus S_j$
until $T_{j+1}$ becomes empty for some value $k$ of $j$.

For each link $l$ from left to right, we greedily pick transmissions
crossing link $l$ into $S_j$ until we have removed at least unit
congestion from $c_l(T_j)$ or reduced $c_l(T_j)$ to 0. Note that if the
transmissions already picked while considering the links on the left of
$l$ also have congestion at least $1$ at link $l$ then we do not add
any more transmission while considering link $l$. So at the end the
following condition holds:
\begin{equation}
\forall l,\, c_l(S_j) \left\{ \begin{array}{ll}
=c_l(T_j) & \text{if} \; c_l(T_j)\le 1, \; \text{and} \\
\ge 1 & \text{otherwise.} \end{array} \right.
\label{eq:ensure_min}
\end{equation}
However, to make sure that $c(S_j)$ is not large, we move back
transmissions from $S_j$, in the reverse order as they were added, into
$T_j$ so long as condition \eqref{eq:ensure_min} remains satisfied. It can
be seen that the construction of $S_j$ takes at most $n|T_j|$ time in the
pick-up step and also in the move-back step.

Now we show that condition (i) of the lemma is satisfied, i.e., $c(S_j)
< 4$ for all $j$. At the end of move-back step, for any transmission
$t \in S_j$ there must exist a link $l$ such that $c_l(S_j)<1+c(t)$
otherwise $t$ would have been removed. We call $l$ as a \emph{sweet spot}
for $t$. Since $c(t)\le 1$ we have $c_l(S_j)<2$ for any sweet spot $l$.

Now consider any link $x$.  Of the transmissions through $x$, let
$L_1$ ($L_2$) denote transmissions having their sweet spot on the
left (right) of $x$.  Consider $y$, the rightmost of these sweet
spots of some transmission $t\in L_1$.  Note first that $c_y(S_j)<
2$. Also all transmissions in $L_1$ pass through both $x,y$.
Thus $c_x(L_1) = c_y(L_1) \leq c_y(S_j)<2$.  Similarly, $c_x(L_2)<2$.
Thus $c_x(S_j)=c_x(L_1)+c_x(L_2)<4$. But since this applies to all links
$x$, $c(S_j)<4$.

To show that the condition (ii) is also satisfied, suppose $S_j$
contains a transmission that uses some link $l$. The construction
process above must have removed at least unit congestion from $l$
in every previous step $1$ through $j-1$.  Thus $c_l(R_i) > j-1$.
That implies $\ceiling{c_l(R_i)} \ge j$.  This also implies that $k \le
\max_l \ceiling{c_l(R_i)} = \ceiling{c(R_i)}$.
\end{proof}

A transmission $t$ is said to cross a node $u$ if $t$ starts at a node
on the left of $u$ and ends at a node on the right of $u$. Since every
transmission $t$ in $S_j$ has length at least $2^{i-1}+1$, $t$ must cross
some node whose number is a multiple of $2^{i-1}$.  The smallest numbered
such node is called the \emph{anchor} of $t$. The \emph{trail-point}
of a transmission $t$ is the right most node numbered with a multiple of
$2^{i-1}$ that is on the left of the anchor of $t$.  If the transmission
has trail-point at node $q2^{i-1}$ for some $q$, then we define $q\bmod 4$
as its {\em phase}.

\begin{lemma}
The set $S_j$ can be scheduled using $O(1)$ wavelengths.
\label{lem:construct}
\end{lemma}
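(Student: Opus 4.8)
The plan is to exploit the fact that every transmission in $S_j$ has length between $2^{i-1}+1$ and $2^i$, so each such transmission crosses at least one node that is a multiple of $2^{i-1}$ but crosses at most three such nodes. Partition the number line into blocks of $2^{i-1}$ consecutive links, and group transmissions of $S_j$ according to the phase $q \bmod 4$ defined just above the lemma (where $q2^{i-1}$ is the trail-point). Since there are only four phases, it suffices to show that the transmissions of each single phase can be scheduled using $O(1)$ wavelengths; then we union over the four phases, paying only a constant factor. So fix one phase and let $S_j'$ be the transmissions in $S_j$ with that phase.

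The key point is that within a single phase the transmissions become essentially ``non-overlapping in a structured way.'' A transmission $t\in S_j'$ with trail-point $q2^{i-1}$ has its anchor at $(q+1)2^{i-1}$ or $(q+2)2^{i-1}$, and since its length is at most $2^i$, its extent is contained in the window $[q2^{i-1}, (q+3)2^{i-1}]$, i.e. within three consecutive blocks starting at the trail-point. Because all transmissions sharing a phase have trail-points at nodes $q2^{i-1}$ with $q$ in a fixed residue class mod $4$, two such transmissions either have the same trail-point or their windows are separated enough that the only links they can both cross lie in a predictable region. First I would make this precise: show that for a fixed phase, if we further split by ``which of the (at most) two possible anchor positions $t$ uses,'' the transmissions in each resulting group that cross a common link all share the same trail-point, hence all lie inside the same length-$3\cdot2^{i-1}$ window. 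Within such a window we can lay down light-trails spanning the window (shutters OFF at the appropriate multiples of $2^{i-1}$), and by condition (i) of Lemma~\ref{lem:breakup} the congestion there is less than $4$, so a constant number of wavelengths — using Bin-Packing ideas~\cite{coffmanjr1996aab} for the non-splittable case — suffices, and windows for different trail-points in the same residue class are disjoint so they can reuse the same wavelengths.

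The main obstacle I anticipate is bookkeeping the geometry carefully: verifying that the chosen blocking really does make same-phase transmissions that overlap share a trail-point (the anchor may be one or two blocks to the right of the trail-point, so one has to be careful that a transmission with trail-point $q2^{i-1}$ and one with trail-point $(q+4)2^{i-1}$ cannot cross a common link — this uses length $\le 2^i < 4\cdot 2^{i-1}$), and then confirming that the light-trails needed for distinct trail-points in a residue class are genuinely non-overlapping so wavelengths can be shared. Once the structural claim is in place, the counting is routine: constantly many phases, constantly many anchor choices, constant congestion per window by Lemma~\ref{lem:breakup}(i), and constantly many wavelengths per window, multiplying to $O(1)$ overall. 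I would also note that condition (ii) of Lemma~\ref{lem:breakup} is not needed here — it is what will make the later \emph{merging} of the classes fit into $O(c+\log n)$ — so the proof of this lemma only invokes part~(i).
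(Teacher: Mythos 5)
Your proposal is correct and follows essentially the same route as the paper: split $S_j$ by phase, observe that same-phase transmissions with distinct trail-points are link-disjoint while those sharing a trail-point all overlap at their anchor, and then use the congestion bound $c(S_j)<4$ from Lemma~\ref{lem:breakup}(i) to pack each window into $O(1)$ light-trails that can be shared across disjoint windows. The only (harmless) redundancy is your extra split by anchor position --- by definition the trail-point is exactly the multiple of $2^{i-1}$ immediately left of the anchor, so the anchor is determined by the trail-point and no further case split is needed.
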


\begin{proof}
We partition $S_j$ further into sets $S_j^p$ containing transmissions
of phase $p$.  Note that the transmissions in any $S_j^p$ either
overlap at their anchors, or do not overlap at all. This is because
if two transmissions in $S_j^p$ have different anchors, then these two
anchors are at least $2^{i+1}$ distance apart.  Since the length of each
transmission is at most $2^i$, the two transmissions can not intersect.

So for the set $S_j^p$, consider $4$ wavelengths, each having shutters
OFF at nodes numbered $(4q+p)2^{i-1}$. Let $x=(4q+p)2^{i-1}$ and
$y=(4(q+1)+p)2^{i-1} = x+2^{i+1}$ be two nearest node having shutters
OFF. Among the light-trails thus created, for a fixed $q$, each of the $4$
light-trails $[x,y]$ can be thought of as a bin in which the transmissions
having extent totally within $[x,y]$ and total bandwidth requirement
at most 1 are to be assigned. This is an instance of the Bin Packing
problem. Clearly, for a fixed $q$, these $4$ light-trails will suffice
for the splittable case, because $c(S_j^p)<4$. Since the light-trails for
different $q$ do not overlap, the instances of the Bin Packing problem
can share wavelengths and hence these $4$ wavelengths will suffice.
For the non-splittable case, $8$ wavelengths will suffice, using standard
Bin Packing ideas, e.g., First-Fit~\cite{coffmanjr1996aab}.

Thus all of $S_j$ can be accommodated in at most $16$ wavelengths for the
splittable case, and at most $32$ wavelengths for the non-splittable case.
\end{proof}

\begin{lemma}
The entire set $R_i$ can be scheduled such that at each link $l$ there are
$O(C_l(R_i)+1)$ light-trails.
\label{lem:trim}
\end{lemma}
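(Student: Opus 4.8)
The plan is to build the schedule one \emph{level} at a time, combining the decomposition of Lemma~\ref{lem:breakup} with the construction of Lemma~\ref{lem:construct}, and then to \emph{trim} every light-trail down to the portion that actually carries traffic. First apply Lemma~\ref{lem:breakup} to split $R_i$ into $S_1,\dots,S_k$ with $k\le\ceiling{c(R_i)}$ and the two stated properties, and for each $j$ schedule $S_j$ on its own block of $O(1)$ wavelengths (at most $32$), exactly as in the proof of Lemma~\ref{lem:construct}; call these the \emph{level-$j$ wavelengths}. A feature of that construction is that every nonempty light-trail it produces serves only transmissions that all cross one common node $a$ (the common anchor of the bin those transmissions land in). I modify the schedule in exactly one way: replace each light-trail by the union of the extents of the transmissions assigned to it, and discard the light-trails that are now empty. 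Shrinking a light-trail never breaks an assignment, and on a single wavelength the shrunk light-trails stay pairwise disjoint, so the result is still a legal schedule of all of $R_i$, using at most $32k=O(c(R_i)+1)$ wavelengths.

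The heart of the argument is a small structural fact about one trimmed light-trail. Its transmissions all cross a common node $a$, so their extents pairwise intersect and the union is an interval $[\alpha,\beta]$; moreover every link of $[\alpha,\beta]$ is used by one of these transmissions --- a link strictly to the left of $a$ is used by the transmission realizing the left endpoint $\alpha$, and a link at or to the right of $a$ is used by the transmission realizing $\beta$. Hence, whenever a nonempty level-$j$ light-trail has a link $l$ in its extent, it actually carries a transmission of $S_j$ that uses link $l$, and then property~(ii) of Lemma~\ref{lem:breakup} forces $\ceiling{c_l(R_i)}\ge j$.

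Now count. Fix a link $l$. The light-trails on any single level-$j$ wavelength are pairwise disjoint, so at most one of them has $l$ in its extent, whence level $j$ contributes at most $32$ light-trails at $l$. By the previous paragraph a level-$j$ light-trail can contribute at $l$ only when $j\le\ceiling{c_l(R_i)}$, so only the levels $1,\dots,\ceiling{c_l(R_i)}$ matter. Therefore the number of light-trails whose extent contains $l$ is at most $32\ceiling{c_l(R_i)}\le 32\,(c_l(R_i)+1)=O(c_l(R_i)+1)$; in particular it is $O(1)$ at links with $c_l(R_i)=0$, since there no transmission and hence no nonempty light-trail reaches $l$. This is the asserted bound.

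The step I expect to be the genuine obstacle, and the reason the naive approach fails, is exactly the trimming. The light-trails produced directly by Lemma~\ref{lem:construct} have the fixed length $2^{i+1}$, so an untrimmed level-$j$ light-trail can straddle a link $l$ of tiny congestion merely because a transmission of $S_j$ sits within distance $O(2^i)$ of $l$; that would only bound the count at $l$ by something like $\max_{|\ell-l|=O(2^i)}c_\ell(R_i)$, not by $c_l(R_i)$. The common-anchor property is precisely what upgrades ``a transmission of $S_j$ is near $l$'' to ``a transmission of $S_j$ uses $l$'', which is where all the work lies; after that the counting is immediate. The one routine thing to check along the way is that the trimmed light-trails, whose endpoints are now arbitrary nodes instead of multiples of $2^{i-1}$, are still pairwise disjoint on each wavelength --- they are, because on a given wavelength consecutive bins are already separated by a gap of length $2^i$ that trimming only widens.
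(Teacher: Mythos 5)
Your proof is correct and follows essentially the same route as the paper: remove empty light-trails, shrink each remaining one to the union of its transmissions' extents, use the common-anchor property to conclude that every link in a trimmed light-trail's extent is actually used by one of its transmissions, and then invoke property~(ii) of Lemma~\ref{lem:breakup} to bound the number of contributing levels at each link. (One immaterial slip: consecutive bins $[x,x+2^{i+1}]$ and $[x+2^{i+1},x+2^{i+2}]$ are adjacent rather than separated by a gap of $2^i$ before trimming, but disjointness of the trimmed light-trails follows anyway since they are subsets of link-disjoint intervals.)
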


\begin{proof} We first consider the light-trails as constructed in
Lemma~\ref{lem:construct}. In this construction, uniformly at all links
there are at most $\ceiling{c(R_i)} \le c(R_i)+1$ sets of light-trails
such that each set corresponds to $O(1)$ light-trails created to schedule
the transmissions of an $S_j$. Note that $c(R_i) = \max_l c_l(R_i)$. So,
in this construction the condition of the lemma is surely satisfied for
the link where the congestion is maximum. For other links the condition of
the lemma may not be satisfied because (1) there may be empty light-trails
and (2) some light-trails may contain links that are not used by any of
the transmissions associated with the light-trail. So we remove empty
light-trails and in case (2) we shrink the light-trails by removing
the unused links (which can only be near either end of the light-trail
because all transmissions assigned to a light-trail overlap at their
anchor). We prove next that with this modification, the condition of
the lemma is satisfied.

Let $j$ be the largest such that a transmission from $S_j$ uses link
$l$. After the modification the light-trails that carries transmissions
from $S_{j'}$ for $j'>j$ do not use link $l$. So now there are $j$
sets of light-trails using link $l$ such that each set has $O(1)$
light-trails. However we know from Lemma~\ref{lem:breakup} that $j
\le \ceiling{c_l(R_i)} \le c_l(R_i)+1$. Thus there are a total of
$O(j)=O(c_l(R_i)+1)$ light-trails at link $l$.
\end{proof}

\subsection{Merge Schedules of All Classes}
\label{sec:all_class}

If we simply collect together the wavelengths as allocated above, we
would get a bound $O(c\log n)$.  Note however, that if two light-trails,
one for transmissions in class $i$ and the other for transmissions in
class $j$, are spatially disjoint, then they could possibly share the
same wavelength. Given below is a systematic way of doing this, which
gets us a sharper bound.

\begin{theorem}
The entire set $R$ can be scheduled using $O(c+\log n)$ wavelengths.
\end{theorem}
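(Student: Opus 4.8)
The plan is to use the per-class schedules from Lemma~\ref{lem:trim} together with the two easy classes $R_0,R_1$ handled earlier, and to \emph{interleave} the light-trails of different classes onto a common set of wavelengths by a careful accounting argument rather than simply stacking them. The key quantitative fact is Lemma~\ref{lem:trim}: class $R_i$ can be scheduled so that every link $l$ is crossed by only $O(c_l(R_i)+1)$ light-trails. Summing over all classes, the total number of light-trails crossing a fixed link $l$ is
\[
\sum_{i=0}^{\ceiling{\log_2(n-1)}} O(c_l(R_i)+1) \;=\; O\!\Big(\sum_i c_l(R_i)\Big) + O(\log n) \;=\; O(c_l(R) + \log n) \;=\; O(c+\log n),
\]
since the $c_l(R_i)$ for disjoint classes sum to $c_l(R)\le c$, and there are $O(\log n)$ classes. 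So at \emph{every} link the total ``width'' of light-trails (counting each light-trail as height~$1$) is $O(c+\log n)$.

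\textbf{From a bound on light-trails per link to a bound on wavelengths.} The remaining task is to realize this collection of light-trails — an interval system in which every point is covered $O(c+\log n)$ times — using $O(c+\log n)$ wavelengths, where each wavelength is itself partitioned into non-overlapping light-trails. This is exactly an interval-graph colouring statement: given a family of intervals (the light-trails, with their transmissions packed inside) such that no link lies in more than $W=O(c+\log n)$ of them, one can partition the family into $W$ ``stacks'' of pairwise-disjoint intervals. First I would invoke the classical fact that interval graphs are perfect, so the chromatic number equals the maximum clique size $W$; a greedy left-to-right sweep assigns each light-trail the smallest available wavelength, and at any node at most $W-1$ other light-trails are simultaneously active, so a free wavelength among $\{0,\dots,W-1\}$ always exists. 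Each colour class is a set of pairwise-disjoint light-trails lying on one wavelength, which is precisely a valid shutter configuration for that wavelength; the transmissions stay inside the light-trail they were assigned to in Lemmas~\ref{lem:breakup}–\ref{lem:trim}, so bandwidth constraints are automatically respected. Adding the $O(c(R_0))=O(c)$ wavelengths for $R_0$ and $O(c(R_1))=O(c)$ for $R_1$ (or folding them into the same sweep) keeps the total at $O(c+\log n)$.

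\textbf{Where the difficulty lies.} The conceptually interesting work has already been front-loaded into Lemmas~\ref{lem:breakup}, \ref{lem:construct}, and~\ref{lem:trim}; the main obstacle in this final step is purely a bookkeeping one: making sure that the per-link bound $O(c_l(R_i)+1)$ really does sum correctly and that the greedy interval colouring respects the structural requirement that a wavelength be a \emph{partition} into light-trails (not merely a set of intervals) — i.e., that gaps between consecutive light-trails on a wavelength are legitimate (an OFF shutter at each endpoint suffices, and endpoints of distinct light-trails on the same wavelength can coincide). I would also need to double-check the boundary classes: $R_0$ and $R_1$ were scheduled on dedicated wavelengths with their own shutter patterns, so either I schedule them separately and add their $O(c)$ wavelengths, or I note their light-trails are ordinary intervals too and throw them into the same sweep. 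Either way the bound is $O(c+\log n)$, and no step requires more than the perfectness of interval graphs plus the additive $O(\log n)$ coming from the number of length-classes.
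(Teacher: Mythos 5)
Your proposal is correct and follows essentially the same route as the paper: invoke Lemma~\ref{lem:trim} to bound the number of light-trails crossing each link by $O(c_l(R_i)+1)$ per class, sum over the $O(\log n)$ classes to get $O(c+\log n)$ intervals through any link, and then colour the resulting interval system with $O(c+\log n)$ colours, one wavelength per colour. The only additions you make beyond the paper's proof are spelling out the greedy interval-colouring argument (the paper simply cites a reference) and explicitly folding in $R_0,R_1$, both of which are sound.
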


\begin{proof}
We know that after the modification in Lemma~\ref{lem:trim}, at each link
$l$ there are a total of $O(c_l(R_i)+1)$ light-trails for each class $i$.
Thus summing over all classes, the total number of light-trails at $l$
are $O(c_l(R)+\log n)$.

Think of each light-trail as an interval, giving us a collection
of intervals such that any link $l$ has at most $O(c_l(R)+\log
n)=O(c+\log{n})$ intervals.  Now this collection of intervals can be
colored using $O(c+\log{n})$ colors~\cite{olariu1991optimal}.  Now for
each color $w$, use a separate wavelength and configure the light-trails
corresponding to the intervals of color $w$ by setting the shutters
OFF at the nodes corresponding to the endpoints of the intervals. Hence
$O(c+\log n)$ wavelengths suffice.
\end{proof}

\section{On the Congestion Lower Bound}
\label{sec:lb}

We now consider an instance of the stationary problem. For convenience,
we assume there are $n+1$ nodes numbered $0,\ldots, n$ where $n=2^k$ for
some $k$ and all logarithms are with base $2$. All the transmissions have
same bandwidth requirement $b=1/(\log n+1)$.

First, we have a transmission going from $0$ to $n$.  Then a
transmission from $0$ to $n/2$ and a transmission from $n/2$ to $n$.
Then four transmissions spanning one-fourth the distance, and so on.
Thus we have transmissions of $\log n+1$ classes, each class having
transmissions of same length. In class $i \in \{0,1,\ldots,\log n\}$ there
are $n/2^i$ transmissions $B(s_{ij},d_{ij})=b$ where $s_{ij}=j2^i,
d_{ij}=(j+1)2^i$ for all $j=0,1,\ldots,(n/2^i)-1$. All other entries of
$B$ are $0$. This is illustrated in Fig.~\ref{fig:lbpattern}(a) for $n=17$.

\begin{figure*}[h!tb]
\centering
\includegraphics[width=\textwidth]{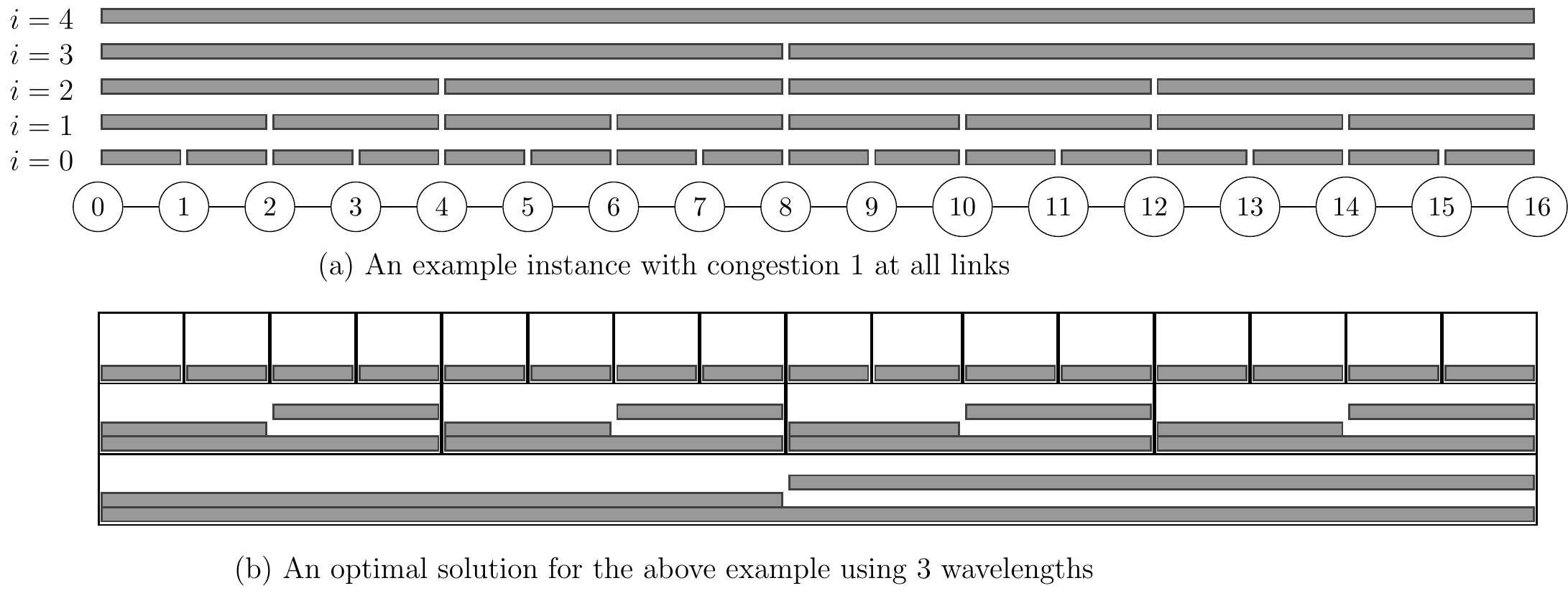}
\caption{An example instance where congestion bound is weak}
\label{fig:lbpattern}
\end{figure*}

Clearly the congestion of this pattern is uniformly $1$. Our algorithm
will use $\log n + 1$ wavelengths, each serving all transmissions of a
single class.

Consider an optimal solution for the non-splittable case. There has
to be a light-trail in which the transmission $[0, n]$ is scheduled.
Thus we must have a wavelength with no OFF shutters except at node 0
and node $n$. In this wavelength, it is easily seen that the longest
transmissions should be scheduled. So we start assigning transmissions
of first few classes in this light-trail. Suppose, all the transmissions
for first $l$ classes are assigned. Then we have total $1 + 2 + 4 +
\cdots + 2^l=2^{l+1}-1$ transmissions assigned to this light-trail. Total
bandwidth requirement of these transmissions should be at most~1. This
gives us $(2^{l+1}-1)(1/(\log n +1)) \leq 1$ implying $l\leq\log(\log
n+2)-1 \approx \log \log n$.

For the subsequent classes of transmissions, we allocate a new wavelength
and create light-trails by putting shutters OFF at nodes $qn/2^{l+1}$
for all possible $q$. It can be seen that again transmissions of next
about $\log \log n$ classes can be put in these light-trails. We repeat
this process until all transmissions are assigned.

In each wavelength we assign transmissions of $\log \log n$
classes. There are total $(1+\log n)$ classes. Thus the total
number of wavelengths needed is $\lceil(1+\log n)/\log \log n\rceil=
\Omega(\log{n}/\log\log{n})$ rather than the congestion bound of~$1$.

For the example in Fig.~\ref{fig:lbpattern}(a), using this procedure,
we have $\log \log n =2$. Thus we require $\lceil(1+\log n)/\log
\log n \rceil=3$ wavelengths. The first wavelength is used for
the transmissions of classes~$\{3,4\}$, the second wavelength for
classes~$\{1,2\}$ and the third for class~$0$. This solution is shown
in Fig.~\ref{fig:lbpattern}(b).

Suppose we add $c-1$ transmissions of extent $[0,n]$ and bandwidth
requirement $1$ to this pattern of transmissions of congestion $1$
uniformly at all links.  We can similarly show that an optimal solution
will require $c-1+\Omega(\log{n}/\log\log{n})$ wavelengths. Thus we
will get an instance of congestion $c$ uniformly at all links but which
requires $\Omega(c+\log{n}/\log\log{n})$ wavelengths, for any $c$.

\section{The On-line Problem}
\label{sec:online}

In the on-line case, the transmissions arrive dynamically.  An arrival
event has parameters $(s_i,d_i,r_i)$ respectively giving the origin,
destination, and bandwidth requirement of an arriving transmission
request.  The algorithm must assign such a transmission to a light-trail
$L$ such that $s_i,d_i$ belong to the light-trail, and at any time the
total bandwidth requirement of transmissions assigned to any light-trail
is at most $1$.  A departure event marks the completion of a previously
scheduled transmission.  The corresponding bandwidth is released and
becomes available for future transmissions.  The algorithm must make
assignments without knowing about subsequent events.

Unlike the stationary problem, congestion at any link may change over
time. Let $c_{lt}(S)$ denote the congestion induced on a link $l$ at time
$t$ by a set of transmissions $S$.  This is simply the total bandwidth
requirement of those transmissions from $S$ requiring to cross link $l$ at
time $t$. The congestion lower bound $c(S)$ is $\max_l \max_t c_{lt}(S)$,
the maximum congestion over all links over all time instants.

For the on-line problem, we present two algorithms -- (1) \sepc having
competitive ratio $\Theta(\log{n})$ and (2) \alc, a simplification
of \sepc inspired by the analysis of the algorithm for the stationary
problem, as may be seen. However, we show that this simplification not
necessarily makes it better. In fact we show that \alc has competitive
ratio in between $\Omega(\log^2 n/\log \log n)$ and $O(\log^2{n})$.


Now we present our two on-line algorithms. In both the on-line algorithms,
when a transmission request arrives, we first determine its class
$i$ and trail-point $x$ (defined in Section~\ref{ssec:conquer}).
The transmission is allocated to some light-trail $[x, x+2^{i+1}]$.
However, the algorithms differ in the way a light-trail is configured
on some candidate wavelength.

\subsection{Algorithm \sepc}
\label{sec:FTSC}

In this algorithm, every allocated wavelength is assigned a class label
$i$ and a phase label $p$, and has shutters OFF at nodes $(4q+p)2^{i-1}$
for all $q$, i.e., is configured to serve only transmissions of class
$i$ and phase $p$.  Whenever a transmission $t$ of class $i$ and phase
$p$ is to be served, it is only served by a wavelength with the same
labels.  If such a wavelength $w$ is found, and a light-trail $L$ on $w$
starting at the trail-point of $t$ has space, then $t$ is assigned to the
light-trail $L$.  If no such wavelength is found, then a new wavelength
$w'$ is allocated, it is labeled and configured for class $i$ and phase
$p$ as above and $t$ is assigned to the light-trail on $w'$ that starts
at the trail-point of $t$.

When a transmission finishes, it is removed from its associated
light-trail.  When all transmissions in a wavelength finish, then its
labels are removed, and it can subsequently be used for other classes
or phases.

\begin{lemma}
Suppose, at some instant of time, among the wavelengths allocated by \sepc,
$x$ wavelengths had non-empty light-trails of the same class and phase
across a link $l'$.  Then there must be a link $l$ having congestion
$\Omega(x)$ at some instant of time.  \label{lem:online1}
\end{lemma}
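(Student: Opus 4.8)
The plan is to show that whenever \sepc has $x$ wavelengths all carrying non-empty light-trails of the same class $i$ and phase $p$ crossing some link $l'$, this forces a large congestion to have occurred at some link and time. The key structural fact is that within a single class-$i$, phase-$p$ wavelength, all the light-trails are the intervals $[(4q+p)2^{i-1},\,(4(q+1)+p)2^{i-1}]$, and by the overlap property established in Lemma~\ref{lem:construct}, any two transmissions served by such a wavelength either share an anchor (a multiple of $2^{i-1}$ of the form interior to the light-trail) or are disjoint. So I would first argue that a light-trail of class $i$, phase $p$ that is non-empty and crosses link $l'$ must contain a transmission crossing the anchor node inside that light-trail; and since every transmission in the light-trail has length $> 2^{i-1}$ while the light-trail has length $2^{i+1}$, every transmission in a non-empty such light-trail that crosses $l'$ in fact crosses a common node — namely the unique multiple of $2^{i-1}$ that all class-$i$/phase-$p$ light-trails through $l'$ share. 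Call that node $a$.

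The second step is a \emph{re-creation / amortization} argument, the kind that is standard for ``no-migration'' online packing: a new wavelength with labels $(i,p)$ is opened only when a transmission $t$ of class $i$, phase $p$ arrives and \emph{every} existing $(i,p)$-wavelength has its relevant light-trail $[x,x+2^{i+1}]$ (the one starting at $t$'s trail-point) too full to admit $t$. So if at the current instant $x$ wavelengths simultaneously carry non-empty $(i,p)$ light-trails crossing $l'$, then consider the moment the $x$-th of these was opened (or, more carefully, walk back through the opening times). Because \sepc never moves transmissions, the light-trail that was full at opening time must still, at that earlier instant, have been occupied to height $> 1 - r$ where $r$ is the bandwidth of the triggering transmission. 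Summing the occupied heights across all $x-1$ previously-opened wavelengths at that earlier time gives total bandwidth $> (x-1)(1-r)$, and since $r \le 1$ this needs a little care — the right move is to pick, among all these light-trails sharing node $a$, the critical link; all those transmissions cross $a$ (by the first step), so the congestion at the link just to one side of $a$, at that earlier instant, is $> (x-1)(1-r) + r$, which is $\Omega(x)$ once one observes $r$ can be absorbed. (If $r$ is close to $1$ this bound degrades, so the cleaner statement is: either $r \ge 1/2$, in which case $t$ alone plus one blocking transmission already gives congestion $\ge 1$ at link $l'$ and we instead iterate the argument only over wavelengths whose blocking load came from transmissions of bandwidth $< 1/2$; or $r < 1/2$ and the sum bound $(x-1)/2$ applies directly.)

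The main obstacle I expect is the bookkeeping of \emph{which} link and \emph{which} time instant to charge the congestion to, because the $x$ wavelengths were not all opened at the same time and transmissions come and go. The robust way around this is to order the $x$ wavelengths by the time their current ``witnessing'' transmission (the one making the light-trail non-empty and crossing $l'$) was assigned, and to run the blocking argument at the latest such assignment time $t^\star$: at $t^\star$, all the \emph{other} $x-1$ wavelengths already had non-empty $(i,p)$ light-trails through $l'$ (since their witnessing transmissions were assigned no later and have not yet departed), and the arriving transmission was blocked from each of them — so each contributes occupied height $> 1 - r$ on a light-trail sharing node $a$ with $l'$. That pins down link $=$ the link adjacent to $a$ on the side toward $l'$, and time $=t^\star$, and yields $c_{l,t^\star}(S) = \Omega(x)$, which is exactly the claim. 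The rest is the routine Bin-Packing-style inequality and the case split on $r$ sketched above.
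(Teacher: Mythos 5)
Your overall architecture matches the paper's: identify a common anchor node $u$ that every transmission in these light-trails must cross (so all of the relevant load concentrates on the link at $u$), then go back to a moment at which an arriving transmission was rejected by all of the other wavelengths, and finish with a bin-packing case split on the bandwidth $r$ of the rejected transmission. The first step is essentially right, although the justification should not be ``length $>2^{i-1}$ inside an interval of length $2^{i+1}$'' (that alone does not force a common crossing point); the correct reason is that \sepc only ever places a transmission into the light-trail that starts at that transmission's trail-point, so all transmissions in a given class-$i$, phase-$p$ light-trail share a trail-point and hence an anchor.

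The genuine gap is in your ``robust'' choice of time instant. You propose to charge the congestion at $t^\star$, the latest time at which one of the $x$ currently-witnessing transmissions was assigned, and assert that this arriving transmission ``was blocked from each of'' the other $x-1$ wavelengths. That assertion is unsupported: a transmission assigned to an \emph{existing} wavelength is only known to fit there; \sepc gives no guarantee that it was first offered to, and rejected by, every other wavelength of that class and phase (and the wavelength holding the latest witness need not be last in any search order). The certificate you need is precisely a \emph{new wavelength allocation}: \sepc opens a fresh wavelength for class $i$, phase $p$ only when \emph{no} existing wavelength with those labels has room, and since each of the other $x-1$ wavelengths has been continuously non-empty (hence continuously labeled) from its allocation up to the instant in the hypothesis, all of them were present and all were filled to height $>1-r$ at the moment the last of the $x$ wavelengths was allocated. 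This is the version you sketch in your second paragraph and then abandon; keep it. Your treatment of the $r>1/2$ case is also only a gesture: the paper resolves it by taking $k$ to be the largest index such that $L_k$ contains a transmission $t'$ of bandwidth $\le 1/2$, noting that every $L_j$ with $j>k$ then holds only transmissions of bandwidth $>1/2$ (so contributes at least $1/2$ at the time $t$ arrived), while every $L_j$ with $j<k$ was filled to height at least $1/2$ at the earlier time when $t'$ arrived; one of these two instants then carries congestion $\Omega(x)$. Some such two-instant argument is needed to make ``iterate over the wavelengths whose blocking load came from small transmissions'' precise.
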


\begin{proof}
Suppose at some instant of time, wavelengths $w_1, w_2, \ldots, w_x$,
ordered according to the time of allocation, had non-empty light-trails
$L_1, L_2, \ldots, L_x$, respectively, of same class and phase across
link $l'$.  Let $u$ be the anchor (defined in Section~\ref{ssec:conquer})
of the transmissions assigned on these light-trails and $l$ be the link
between node $u$ and node $u+1$.

Now suppose wavelength $w_x$ was allocated due to a transmission $t$.
This could only happen because $t$ could not fit in the wavelengths $w_j$
for all $j\le x-1$.

For the splittable case this can only happen if light-trails $L_1$
through $L_{x-1}$ together contain transmissions of congestion at least
$x-1-c(t) = \Omega(x)$ crossing the anchor $u$ of $t$, when $t$ arrived.
Thus at that time $l$ had congestion $\Omega(x)$, giving us the result.

For the non-splittable case, suppose that $c(t)\le 0.5$.  Then the
transmissions in each of the light-trails $L_j, 1\le j \le x-1$, must
have congestion of least $0.5$ at $l$ when $t$ arrived, giving congestion
$\Omega(x)$.  So suppose $c(t)>0.5$.  Let $k$ be the largest such that
light-trail $L_k$ contains a transmission $t'$ with $c(t')\le 0.5$ when
$t$ arrived.  If no such $k$ exists, then clearly the congestion at $l$
when $t$ arrived is $\Omega(x)$.  If $k$ exists, then all the light-trails
$L_j$, $j > k$ have transmissions of congestion at least $0.5$ at $l$ when
$t$ arrived.  And the light-trails $L_j$, $j \le k$ had transmissions
of congestion at least $0.5$ at $l$ when $t'$ arrived.  So at one of
the two time instants the congestion at $l$ must have been $\Omega(x)$.
\end{proof}

\begin{theorem}
\sepc is $\Theta(\log n)$ competitive.
\label{thm:online1}
\end{theorem}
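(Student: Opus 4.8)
The plan is to prove the two directions of the $\Theta(\log n)$ bound separately: first the upper bound, that \sepc{} never uses more than $O(\log n)$ times the optimum, and then the matching lower bound, that some request sequence forces \sepc{} (in fact any on-line algorithm, via the $\Omega(\log n)$ result promised for Section~\ref{sec:problemlb}, but here we can argue it directly for \sepc) to use $\Omega(\log n)$ times the optimum.

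For the upper bound, let $\textsc{Opt}$ denote the number of wavelengths used by the off-line adversary on a given sequence, and let $c$ be the congestion lower bound $\max_l\max_t c_{lt}(R)$; note $\textsc{Opt}\ge c$. Consider any instant $t$ and count the wavelengths \sepc{} has in use. Each such wavelength carries a non-empty light-trail and is labelled with a (class, phase) pair. Fix a class $i$ and phase $p$; among the $w_{i,p}$ wavelengths currently labelled $(i,p)$, observe that their non-empty light-trails, restricted to any single ``block'' $[x,x+2^{i+1}]$, all share the anchor node in that block, so by the same reasoning as in Lemma~\ref{lem:online1} some link sees congestion $\Omega(w'_{i,p})$ where $w'_{i,p}$ is the number of these wavelengths non-empty \emph{within that block}; since different blocks for the same $(i,p)$ are spatially disjoint this gives $w_{i,p}=O(c+1)$. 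Wait --- more carefully, Lemma~\ref{lem:online1} already says: if $x$ wavelengths of the same class and phase have non-empty light-trails across a \emph{single} link $l'$, then some link has congestion $\Omega(x)$, hence $x=O(c)$; and a wavelength labelled $(i,p)$ that is in use must have a non-empty light-trail somewhere, and all the non-empty light-trails on distinct wavelengths of class $i$, phase $p$, that overlap a common point are handled by the lemma. The clean statement to extract is: at any instant, for each $(i,p)$ the number of in-use wavelengths with that label whose light-trails all cross one fixed link is $O(c)$, and summing the lemma's bound appropriately over the $O(\log n)$ classes and $4$ phases gives that the total number of in-use wavelengths is $O((c+1)\log n)=O(\textsc{Opt}\cdot\log n)$, since $\textsc{Opt}\ge\max(c,1)$. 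I would state and prove a short counting lemma that for a single $(i,p)$ the number of wavelengths in use is $O(c+1)$ --- this follows from Lemma~\ref{lem:online1} applied block-by-block together with disjointness of blocks --- and then sum.

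For the lower bound, I would exhibit a sequence on which \sepc{} is forced to $\Omega(\log n)$ times optimum. Take $n=2^k$. For each class $i=0,\dots,\log n$, present a single transmission of length $2^i$, bandwidth $\approx 1$, placed so that over time these can be made to overlap at a common link; more precisely, issue them so that all $\log n+1$ of them are simultaneously active and pairwise overlapping at the midpoint link. Because \sepc{} refuses to mix classes on a wavelength, it must use a separate wavelength for each class, i.e.\ $\Omega(\log n)$ wavelengths. The off-line adversary, however --- if the bandwidths are taken small enough, say $b=1/(\log n+1)$ as in Section~\ref{sec:lb} --- can pack all of them onto a single wavelength configured as one light-trail $[0,n]$, using $O(1)$ wavelengths. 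Hence the competitive ratio is $\Omega(\log n)$. (One must double-check the overlap geometry so that transmissions of the various classes genuinely share a link; issuing for class $i$ the transmission $[0,2^i]$ works, since they all contain link $0$.) Care is needed that the lower-bound construction is consistent with \sepc{}'s trail-point/block structure, but since each class uses a distinct wavelength regardless, the bound is robust.

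The main obstacle is the upper bound's bookkeeping: Lemma~\ref{lem:online1} is phrased for wavelengths crossing a \emph{single} link of a single class and phase, whereas to bound the total wavelength count I need to aggregate over all links, classes, and phases without double-counting or losing more than a $\log n$ factor. The key observation that makes this go through is that for a fixed class $i$ and phase $p$ the ``blocks'' $[x,x+2^{i+1}]$ partition the array (up to phase offset) and light-trails in distinct blocks are spatially disjoint, so the per-link bound $O(c)$ from Lemma~\ref{lem:online1} is in fact also a bound on the number of in-use $(i,p)$-wavelengths overall --- each such wavelength is non-empty in at least one block, but two wavelengths that are both in use may be non-empty in different blocks, so I actually need: within each block the count is $O(c)$, and... hmm, but wavelengths aren't block-specific. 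Here the resolution is that a wavelength in use must be non-empty in \emph{some} block, and I can charge it to the leftmost such block; then within each block at most $O(c)$ wavelengths are charged (by Lemma~\ref{lem:online1} applied to the anchor link of that block), and summing the charges over the $O(n/2^{i})$ blocks would be too much --- so instead the right move is simply to apply Lemma~\ref{lem:online1} directly: if $w_{i,p}$ wavelengths of class $i$, phase $p$ are in use, pick any one of them, it has a non-empty light-trail crossing some link $l'$; but the \emph{other} $w_{i,p}-1$ need not cross $l'$. So the honest statement is that the number of in-use $(i,p)$-wavelengths is bounded by $O(c)$ only per block, and the total over all blocks is what matters --- but then \sepc{} could use up to $O(c\cdot n/2^i)$ wavelengths per $(i,p)$, which is far more than $O(\log n)\cdot\textsc{Opt}$. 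I expect the actual argument to instead compare \sepc{}'s count at each instant to the number of \emph{distinct block-anchor-congestions} and tie it to $\textsc{Opt}$ via the fact that $\textsc{Opt}$ must itself serve all those overlapping transmissions; pinning down this comparison --- essentially a potential/charging argument showing that \sepc{}'s open wavelengths for class $i$ at a given instant are at most $O(c)$ \emph{in total} because a fresh wavelength is opened only when all existing same-label wavelengths are full at the relevant anchor, so full same-label wavelengths are ``witnessed'' by congestion --- is the crux, and is exactly what Lemma~\ref{lem:online1}'s proof sketch (the $w_x$-was-just-allocated argument) is designed to supply. So the real plan is: at the instant \sepc{} has the most wavelengths of label $(i,p)$ open, the last-opened one witnesses congestion $\Omega(w_{i,p})$ at its anchor link at that time; hence $w_{i,p}=O(c)$; sum over $i,p$ to get $O(c\log n)\le O(\textsc{Opt}\log n)$.
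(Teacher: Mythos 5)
Your final plan is correct and is essentially the paper's proof: the upper bound rests on exactly the observation you isolate at the end (a new $(i,p)$-labelled wavelength is opened only when every existing one is full at the arriving transmission's light-trail, so Lemma~\ref{lem:online1} bounds the number of simultaneously open $(i,p)$-wavelengths by $O(c+1)$, and summing over the $O(\log n)$ class--phase labels gives $O(\log n)$ times the optimum), while the paper packages the same idea by pigeonholing at the moment the $w$th wavelength overall is allocated rather than bounding each label's peak separately. Your lower-bound instance (transmissions $[0,2^i]$, one per class, each of bandwidth $1/(\log n+1)$, all simultaneously active) is the paper's construction verbatim, and your mid-proof worry about summing over spatially disjoint blocks is a red herring for precisely the reason you eventually give.
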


\begin{proof}
Suppose that \sepc uses $w$ wavelengths. We will show that the best
possible algorithm (including off-line algorithms) must use at least
$\Omega(w/\log{n})$ wavelengths. That will prove that \sepc is $O(\log
n)$ competitive.

Consider the time at which the $w$th wavelength was allocated.
At this time $w-1$ wavelengths are already in use, and of these at least
$w'=(w-1)/(4\log{n})$ must have the same class and phase. Among these $w'$
wavelengths consider the one which was allocated last to accommodate
some light-trail $L$ serving some newly arrived transmission. At that
time, each of the previously allocated $w'-1$ wavelengths was nonempty
in the extent of $L$.  By Lemma~\ref{lem:online1}, there is a link
that had congestion $\Omega(w'-1) = \Omega(((w-1)/(4\log{n}))-1) =
\Omega(w/\log{n})$ at some time instant. This is a lower bound on any
algorithm, even off-line.

We show the lower bound $\Omega(\log n)$ using the following example. Let
$n=2^k+1$. At each time $t=0,1,\ldots,k$, a transmission $[0, 2^t]$
arrives.  All transmissions have bandwidth requirement $1/(k+1)$. At
time $k+1$ all transmissions depart together. \sepc takes
$k$ wavelengths because each transmission is of a different class. The
optimal off-line algorithm assigns all of them to a single light-trail
spanning the complete network and hence takes only one wavelength.
\end{proof}

\subsection{Algorithm \alc}
\label{sec:FTAC}

This is a simplification of \sepc in that the allocated wavelengths are
not labeled.  When a transmission $t$ of class $i$ and trail-point $x$
arrives, we search the wavelengths in the order they were allocated for
a light-trail $L$ of extent $[x, x+2^{i+1}]$ such that $L$ has enough
space to serve $t$. If such a light-trail $L$ is found, then $t$ is
assigned to $L$.  If no such light-trail is found, then an attempt is
made to create a light-trail $[x, x+2^{i+1}]$ from the unused portions
of one of the existing wavelengths in a first-fit manner in the order
they were allocated. If such a light-trail $L$ can be created, then $L$
is created and $t$ is assigned to $L$.  Otherwise a new wavelength $w$
is allocated, the required light-trail $L$ of extent $[x, x+2^{i+1}]$ is
created on $w$, and $t$ is assigned to $L$. The portion of the wavelength
$w$ outside the extent of $L$ is marked unused.

When a transmission finishes, it is removed from its associated
light-trail. If this makes the light-trail empty then we mark its extent
on the corresponding wavelength as unused.

\begin{theorem}
\alc is $O(\log^2n)$ competitive.
\end{theorem}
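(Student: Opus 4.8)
The plan is to bound the number of wavelengths used by \alc\ by comparing it, class by class, both to the cost of \sepc\ on the same instance and to the off-line optimum, losing one logarithmic factor at each of these two comparisons. First I would fix an instance and consider the moment at which \alc\ allocates its $w$th (final) wavelength, triggered by some transmission $t$ of class $i$, trail-point $x$, phase $p$. The key structural observation is that $t$ failed to fit into any existing wavelength \emph{and} no light-trail of extent $[x,x+2^{i+1}]$ could be carved out of the unused portion of any existing wavelength. I want to extract from this failure a large congestion witness, in the spirit of Lemma~\ref{lem:online1}, but now complicated by the fact that wavelengths are unlabeled and may carry light-trails of many different classes and phases simultaneously.

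The main step is a counting argument at the link $l$ between the anchor $u$ of $t$ and $u+1$ (or, more carefully, over the extent $[x,x+2^{i+1}]$). Each of the $w-1$ existing wavelengths either (a) already contains a light-trail that spatially overlaps the interval $[x,x+2^{i+1}]$, or (b) has its portion over $[x,x+2^{i+1}]$ blocked by light-trails that, while not covering the whole interval, collectively prevent a fresh $[x,x+2^{i+1}]$ trail from being formed; in either case the wavelength carries a non-empty light-trail crossing some link in $[x,x+2^{i+1}]$ at this instant. By pigeonhole over the $O(\log n)$ classes and the $O(1)$ phases, at least $w' = \Omega(w/\log n)$ of these wavelengths carry non-empty light-trails of one common class $i'$ and phase $p'$, all crossing a common link $l'$. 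Now I would argue that these $w'$ same-class, same-phase light-trails behave exactly as in the \sepc\ analysis: ordering them by allocation time, the last one to be allocated was created only because $t'$ (its triggering transmission) did not fit in the earlier $w'-1$, which forces congestion $\Omega(w')$ at the anchor link of class $i'$ at some past time instant. Invoking Lemma~\ref{lem:online1} (or re-running its splittable / non-splittable case analysis) this yields a lower bound $c = \Omega(w') = \Omega(w/\log n)$ on the off-line optimum $\mathrm{OPT}$. Hence $w = O(\mathrm{OPT}\cdot\log n)$? — this would be too strong, so the honest accounting is that one $\log n$ is lost in the pigeonhole over classes and a \emph{second} $\log n$ is lost because within a single class the first-fit packing across unlabeled wavelengths is not as tight as the labeled scheme, so that $w$ same-class light-trails crossing $l'$ only guarantee congestion $\Omega(w'/\log n)$ there; combining, $w = O(\mathrm{OPT}\cdot\log^2 n)$.

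The delicate point, and the one I expect to be the main obstacle, is case (b): ruling out the possibility that many wavelengths each have only a \emph{small} obstruction inside $[x,x+2^{i+1}]$, so that no single link in that interval sees many simultaneous light-trails even though no free $[x,x+2^{i+1}]$ trail exists anywhere. To handle this I would use the fact that any obstructing light-trail on a class-$i'$ wavelength with $i' \le i$ either lies entirely inside $[x,x+2^{i+1}]$ or, if $i' > i$, its endpoints are multiples of $2^{i'-1} \ge 2^i$ and hence it either contains $[x,x+2^{i+1}]$ or is disjoint from it — so longer-class obstructions always cover the whole interval and contribute to every link in it, while shorter-class obstructions, being nested at the dyadic scale $2^{i'}$, can be summed up: over all $w-1$ wavelengths the total obstruction inside $[x,x+2^{i+1}]$ must, at some link, reach $\Omega(w - \log n)$ (the $\log n$ absorbing the per-class boundary effects and the $O(1)$-sized First-Fit slack per class). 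That localizes a heavily-loaded link, after which the same-class/same-phase pigeonhole and the \sepc-style argument finish the proof. Once the localization is in place, the rest is the routine splittable / non-splittable bookkeeping already carried out in Lemma~\ref{lem:online1} and Theorem~\ref{thm:online1}, applied twice.
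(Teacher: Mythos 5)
There is a genuine gap, and it sits exactly where you suspected: the localization to a single heavily-crossed link. Your pigeonhole step as written is misordered — grouping the $w-1$ blocking wavelengths by class and phase gives you $\Omega(w/\log n)$ light-trails of a \emph{common class and phase}, but not light-trails crossing a \emph{common link}: for a class $i'<i$ there are $2^{i+1}/2^{i'+1}$ distinct positions a class-$i'$, phase-$p'$ light-trail can occupy inside $[x,x+2^{i+1}]$, and the blocking trails can be spread over all of them. Your proposed repair for this — that the total obstruction inside $[x,x+2^{i+1}]$ ``must, at some link, reach $\Omega(w-\log n)$'' — is false: each of the $w-1$ wavelengths need only contain \emph{one} non-empty light-trail somewhere in the interval to be unusable for $t$ (either because it overlaps a candidate trail or because it prevents carving a fresh one), so $w-1$ wavelengths each blocked by a single unit-length light-trail at $w-1$ distinct positions give no link with more than $O(1)$ crossings. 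The dyadic observation you make handles classes $i'\ge i$ (those obstructions essentially cover the whole interval), but the hard case is precisely the short obstructions, and ``can be summed up'' is not an argument there.

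The paper closes this gap with a recursive descent that you would need to reproduce: classify the blocking light-trails into those crossing the left endpoint of $L$, those crossing the right endpoint, and those strictly contained in $L$. If either endpoint is crossed by $w/(4\log n)$ trails you are done; otherwise at least $w-1-w/(2\log n)$ wavelengths are blocked by strictly shorter light-trails, and you recurse on one of those. Since lengths strictly decrease, the recursion bottoms out after at most $\log n$ levels at a single-link trail, still retaining $w/2-\log n\ge w/(4\log n)$ conflicting wavelengths. Only \emph{after} this localization does the class--phase pigeonhole (the second $\log n$ loss) apply, and then Lemma~\ref{lem:online1} — which does carry over to \alc unchanged — gives congestion $\Omega(w/\log^2 n)$ with no further loss. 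Your attribution of the second logarithmic factor to ``first-fit slack within a class'' is therefore also off: no such slack is incurred or needed; the two factors come from the localization recursion and the class--phase pigeonhole respectively.
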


\begin{proof}
Suppose \alc uses $w$ wavelengths.  We will show that an
optimal algorithm will use at least $\Omega(w/\log^2n)$.  Clearly,
we may assume $w=\Omega(\log^2n)$.

We first prove that there must exist a point of time in the execution
of \alc when there are at least $w/(4\log{n})$ non-empty
light-trails (not necessarily of same class and phase) crossing the
same link.

Number the wavelengths in the order of allocation.  Consider the
transmission $t$ for which the $w$th wavelength was allocated for the
first time.  Let $L$ be the light-trail used for $t$.  Clearly, the
$w$th wavelength had to be allocated because at that time the $w-1$
previously allocated wavelengths contained light-trails overlapping
with $L$. Let $S'$ denote this set of light-trails overlapping with
$L$. A light-trail $L' \in S'$ can be any of the three types - (1) $L$
and $L'$ overlap at the leftmost link of $L$, (2) $L$ and $L'$ overlap
at the rightmost link of $L$ but not at the leftmost link of $L$, and
(3) $L'$ is totally contained in the extent of $L$, without containing
either the leftmost or the rightmost link of $L$. Construct $S \subseteq
S'$ having exactly one light-trail from each of the $w-1$ wavelengths
by picking light-trails from $S'$ in the order -- first of type 1, then
of type 2 and finally of type 3. Now we consider three possible cases.

Case 1: $S$ has at least $w/(4\log{n})$ light-trails of type 1. Then
we have $w/(4\log{n})$ light-trails crossing the leftmost link of $L$.

Case 2: $S$ has at least $w/(4\log{n})$ light-trails of type 2. Then we
have $w/(4\log{n})$ light-trails crossing the rightmost link of $L$.

Case 3: $S$ has less than $w/(4\log{n})$ light-trails of
type 1 and less than  $w/(4\log{n})$ light-trails of type
2. Then, number of light-trails of type 3 in $S$ must be at least
$w'=w-1-2w/(4\log{n})=w-1-w/(2\log{n})$. Let $L'\in S$ be the light-trail
allocated on the $w'$th of the wavelengths having a light-trail of
type 3 in $S$. Note that $L'$ is strictly smaller than $L$.  Thus we
can repeat the above argument by using $L'$ and $w'$ in place of
$L$ and $w$ respectively, only $\log{n}$ times, and if we fail each
time to find at least $w/(4\log{n})$ light-trails crossing a link,
we will end up with a light-trail $L''$ such that there are at least
$w''$ wavelengths having light-trails conflicting with $L''$, where
$w''=w-\log{n}-\log{n}(w/(2\log{n}))=w/2 - \log{n}\ge w/(4\log{n})$
for $w=\Omega(\log^2n)$.  But $L''$ is a single link and so we are done.

Of these $w/(4\log{n})$ light-trails, at least $w/(16\log^2n)$
must have the same class and phase.  But it can be shown that
Lemma~\ref{lem:online1} is also true for \alc, and hence there is a
link having congestion $\Omega(w/(16\log^2n))$ at some time instant.
But this is a lower bound on the number of wavelengths required by any
algorithm, including an off-line algorithm.
\end{proof}

\subsection{Lower Bound for \alc}

We give a sequence of transmissions for which \alc takes
$\Omega(\log^2 n/\log \log n)$ wavelengths but an optimal off-line
algorithm, \opt, requires only one wavelength.

\begin{theorem}
\alc is $\Omega(\log^2n/\log \log n)$ competitive.
\end{theorem}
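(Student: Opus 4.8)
The plan is to construct an adversarial request sequence that repeatedly exploits the fact that \alc does not reserve wavelengths per (class, phase) and assigns light-trails in first-fit order of allocation. The target is to force \alc to open $\Omega(\log^2 n / \log\log n)$ wavelengths while \opt can place everything in one wavelength spanning the whole array, so that every request actually fits together comfortably. The basic building block is the ``bad instance'' of Section~\ref{sec:lb}: a nested dyadic family of transmissions, one of length $2^t$ for each $t$, all of the same tiny bandwidth $b = 1/(\log n + 1)$, so that the total bandwidth of one whole family is at most $1$. \opt puts an entire family into a single light-trail on one wavelength. The idea is to feed \alc many such families, in a carefully interleaved order, so that each family is forced onto several new wavelengths because the earlier wavelengths, though not full of bandwidth, are already occupied by light-trails of the wrong extent at the relevant trail-points, and \alc cannot create the needed light-trail from the ``unused portions'' of those wavelengths because those portions are fragmented.

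Concretely, I would proceed in stages. In stage $s$ I introduce a fresh dyadic family but shift its trail-points so that each of its $\log n + 1$ class-$i$ transmissions has trail-point at a node that is \emph{interior} to (not aligned with) the light-trails already laid down on the existing wavelengths for the families of earlier stages. Because \alc searches existing wavelengths in allocation order for a light-trail of exactly the extent $[x, x+2^{i+1}]$ and, failing that, tries to \emph{carve} such a light-trail out of the unused region of some wavelength in first-fit order, the key is to arrange that on every previously-opened wavelength the contiguous unused region around node $x$ is shorter than $2^{i+1}$ for the large classes $i$. This is achieved by choosing the shifts so that earlier light-trails straddle node $x$ on every wavelength. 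Then \alc is forced to open a brand-new wavelength for that class of the current family. Counting: each stage forces roughly $\Theta(\log n / \log\log n)$ new wavelengths to be opened --- this factor, rather than $\log n$, appears because within a single wavelength \alc \emph{can} pack up to about $\log\log n$ dyadic classes of one family before running out of bandwidth (exactly the computation in Section~\ref{sec:lb}: $(2^{l+1}-1) b \le 1$ forces $l \approx \log\log n$), so the $\log n + 1$ classes of one family spread over $\Theta(\log n/\log\log n)$ wavelengths even in the best case for \alc. Running $\Theta(\log n)$ stages then yields $\Omega(\log^2 n / \log\log n)$ wavelengths total, while \opt still uses one wavelength per family --- and by a further trick (time-multiplexing: letting each family depart before the next arrives, or giving all families disjoint bandwidth slices within the single full-array light-trail) \opt can in fact be held to exactly one wavelength overall.

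The main obstacle is the bookkeeping that certifies the ``carving fails'' claim: I must track, on every one of the up-to-$w$ previously opened wavelengths, the exact pattern of used and unused intervals, and verify that for the relevant trail-point $x$ and class $i$ of the current stage, no wavelength has a free interval of length $\ge 2^{i+1}$ containing $x$. This requires choosing the stage shifts (e.g.\ by an odometer / mixed-radix scheme over the dyadic scales, or by using shifts that are distinct residues modulo successive powers of two) so that the light-trails of earlier stages interlock at all scales --- an earlier light-trail of class $j$ must cover node $x$ whenever $j \ge i$ or so. A second, smaller obstacle is confirming that Lemma~\ref{lem:online1}-type reasoning cannot be run in reverse to show the congestion is actually high: I must make sure that throughout the whole construction the congestion at every link at every time stays $O(1)$ (ideally exactly the congestion contributed by the currently-active families), so that the lower bound on \alc is genuinely not matched by \opt. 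I would organize the final write-up as: (1) define the $s$-th family and its trail-point shifts; (2) prove the invariant that after processing stages $1,\dots,s$, every open wavelength's unused region near every stage-$(s+1)$ trail-point is too short; (3) conclude \alc opens $\Omega(\log n/\log\log n)$ new wavelengths per stage; (4) exhibit \opt's one-wavelength schedule and bound the congestion; (5) combine to get the ratio $\Omega(\log^2 n/\log\log n)$.
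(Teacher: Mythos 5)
There is a genuine gap, and it sits at the heart of your per-stage count. You claim that within a single wavelength \alc ``can pack up to about $\log\log n$ dyadic classes of one family before running out of bandwidth,'' importing the computation $(2^{l+1}-1)b\le 1$ from Section~\ref{sec:lb}. That computation describes an \emph{offline} schedule that puts nested transmissions of many classes into one full-length light-trail; it does not describe \alc. Under \alc a class-$i$ transmission with trail-point $x$ must be served by a light-trail of extent exactly $[x,x+2^{i+1}]$, so two overlapping transmissions of distinct classes require two overlapping light-trails of distinct extents, and overlapping light-trails can never share a wavelength. Consequently a single nested dyadic family already forces $\log n+1$ distinct wavelengths (this is precisely the paper's $Hill$ gadget), bandwidth is never the binding constraint, and the $\log\log n$ in the target bound must come from somewhere else --- in the paper it comes from the geometric recursion $T(n)=\log n+T(n/\log n)$, not from packing classes into a wavelength.

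Once that is corrected, your stage accounting collapses in the other direction. After stage~1 each newly opened wavelength carries exactly one light-trail and is otherwise one large unused interval, so when you present a shifted family in stage~2, \alc can \emph{carve} the required light-trails $[x,x+2^{i+1}]$ out of those unused regions for all but the few largest classes; your ``carving fails'' invariant does not hold, and it cannot be rescued merely by choosing clever shifts, because blocking a short light-trail everywhere requires every wavelength to be densely fragmented into short used pieces with gaps shorter than $2^{i+1}$. Producing that fragmentation is the actual work: the paper builds, on each of $k=\lfloor\log n\rfloor$ wavelengths, $\lfloor n/k\rfloor$ staggered unit-length light-trails (pattern $P$), steering each one onto its intended wavelength with a temporary hill underneath it and keeping earlier wavelengths alive via a persistent hill and a left-half/right-half ordering. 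After that, every free gap has length $<k$, so the next stage is forced to use light-trails of length $\ge k$; the effective network shrinks by a factor $\log n$ per stage, giving $\approx\log n$ new wavelengths per stage over only $O(\log n/\log\log n)$ stages --- the reverse of your $\log n$ stages $\times$ $\log n/\log\log n$ wavelengths. Finally, the bandwidth reduction you mention as ``a further trick'' is mandatory, not optional: with $b=1/(\log n+1)$ and $\Theta(\log n)$ concurrent families the congestion is $\Theta(\log n)$ and \opt needs that many wavelengths, while time-multiplexing lets \alc free and reuse its wavelengths; the paper avoids both problems by taking all bandwidths equal to $1/n^2$ and keeping everything alive.
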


Our transmission sequence consists of several (exact count will be
shown later) subsequences, which we call stages.  In all stages, all
transmissions have height, i.e., bandwidth requirement of $1/n^2$. Our
transmission sequence is such that, at any point of time, there are less
than $n^2$ active transmissions. \opt will put all transmissions in a
single light-trail using full length of a wavelength.  On the other hand,
it will be seen that \alc will allocate $\Omega(\log^2n/\log \log n)$
wavelengths in total for all stages.  We describe the first stage only,
the other stages are scaled versions of the first stage. The goal of the
first stage is to force \alc to allocate wavelengths with light-trail
patterns given in the following lemma.

\begin{lemma}
Let the network have $q+1$ nodes numbered $0,\ldots,q$. Then there is
a transmission sequence for which \alc allocates $k=\floor{\log q}$
wavelengths numbered $0,\ldots,k-1$, with the following pattern $P$
of light-trails: each wavelength $i$ has $\floor{n/k}$ unit-length
light-trails $[jk+i,jk+i+1]$, each containing a single transmission,
for all $j=0,\ldots,\floor{n/k}-1$.

\label{lem:phase1}
\end{lemma}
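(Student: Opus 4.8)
The plan is to construct the transmission sequence for the first stage explicitly, feeding transmissions to \alc in a carefully chosen order so that at each step the first-fit rule is forced to open a fresh wavelength rather than reuse an existing one, and then argue that the resulting configuration is exactly the pattern $P$. The key observation driving the construction is that \alc scans wavelengths in allocation order and will reuse a wavelength only if it already contains (or can be carved to contain) a light-trail of the exact extent $[x,x+2^{i+1}]$ demanded by the arriving transmission's class $i$ and trail-point $x$. So if I present transmissions whose classes are all distinct — or more precisely whose required light-trail extents are pairwise incompatible with the light-trails already carved on earlier wavelengths — each new transmission is forced onto a new wavelength. Since the class of a length-$\ell$ transmission is $\lceil \log_2 \ell\rceil$, and lengths $1,2,4,\ldots,2^{k-1}$ give $k=\floor{\log q}$ distinct classes, a first batch of $k$ transmissions of geometrically increasing length can be made to occupy $k$ distinct wavelengths $0,\ldots,k-1$.

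First I would send, for each $i=0,\ldots,k-1$, a single transmission of length $2^i$ anchored so that its trail-point lies at a node of the form $jk+i$ — this is the transmission that opens wavelength $i$. Then I would flood each wavelength $i$ with many unit-length transmissions whose trail-points are the nodes $jk+i$, $j=0,\ldots,\floor{n/k}-1$: each such transmission has class $0$, needs a unit light-trail $[jk+i,jk+i+1]$, and (with bandwidth $1/n^2$) there is room, so first-fit will place it on the first wavelength that can host such a light-trail. The delicate point is to order these so that the unit-length transmissions destined for wavelength $i$ cannot be absorbed by wavelengths $0,\ldots,i-1$: I would exploit the fact that once wavelength $i'$ (for $i'<i$) has been carved into the light-trails forced by its own batch, the unused portions of $i'$ no longer contain a free sub-interval of the form $[jk+i,jk+i+1]$ with the right offset, because the carving pattern on wavelength $i'$ is offset by $i'\neq i$ modulo $k$. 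Making this offset argument precise — showing that the leftover gaps on earlier wavelengths never align with the demanded unit intervals — is where the real work lies, and I would phrase it as an invariant maintained inductively over the arrival sequence: after processing the batch for wavelength $i$, wavelength $i$ carries exactly the $\floor{n/k}$ light-trails $[jk+i,jk+i+1]$ and nothing on wavelengths $0,\ldots,i$ offers a free interval matching any not-yet-served demand.

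The main obstacle, and the step I would spend the most care on, is controlling \alc's first-fit carving of \emph{unused portions} of existing wavelengths — the middle case in the algorithm where, before opening a new wavelength, it tries to create the needed light-trail from the gaps of an already-allocated wavelength. I need to guarantee that the length-$2^i$ opener of wavelength $i$ genuinely cannot be carved out of wavelengths $0,\ldots,i-1$; for this I would arrange the anchors/trail-points of the openers so that the extent $[x,x+2^{i+1}]$ always straddles at least one OFF-shutter node already committed on every earlier wavelength, which is automatic because the earlier wavelengths have been chopped into pieces of length $\le 2^i$ by their own shorter traffic. Once the invariant is established, the lemma's conclusion — $k=\floor{\log q}$ wavelengths, each wavelength $i$ holding the $\floor{n/k}$ prescribed unit light-trails — falls out directly. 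The remaining stages of the theorem then recurse on each of these $\floor{n/k}$ unit light-trails viewed as a scaled-down network, contributing a further $\floor{\log q}$-ish factor per level over $\Theta(\log n/\log\log n)$ levels, yielding the claimed $\Omega(\log^2 n/\log\log n)$ total against \opt's single wavelength; but that recursion is outside the scope of this lemma's proof.
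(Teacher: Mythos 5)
Your construction has a fatal gap: the invariant you propose to maintain --- that the leftover gaps on wavelengths $0,\ldots,i-1$ never contain a free interval aligned with a unit demand destined for wavelength $i$ --- is simply false. After wavelength $i'$ has been given its unit light-trails $[jk+i',jk+i'+1]$, its unused portions are the intervals $[jk+i'+1,(j+1)k+i']$ of length $k-1$, and for $i'<i$ the extent $[jk+i,jk+i+2]$ needed by a class-$0$ transmission with trail-point $jk+i$ sits entirely inside such a gap. Since \alc explicitly tries to carve a light-trail out of the unused portions of already-allocated wavelengths (in first-fit order) before opening a new one, every one of your ``flood'' transmissions for wavelength $i\ge 1$ would be absorbed into the gaps of wavelength $0$, and the whole pattern collapses onto one or two wavelengths. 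The same objection kills your ``openers'': distinct classes do \emph{not} force distinct wavelengths in \alc (that is a property of \sepc, which labels wavelengths by class); a length-$2^i$ opener that does not spatially overlap the earlier openers will be carved from the unused remainder of wavelength $0$. No static, purely spatial offset argument can work here, because the final pattern $P$ necessarily leaves large free gaps on the lower wavelengths.

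The missing idea --- and the heart of the paper's proof --- is that the adversary must use \emph{temporary} transmissions that arrive and later depart. The paper defines $Hill(h,x)$: $h$ pairwise-overlapping transmissions through the single link $[x,x+1]$ with $h$ distinct classes, which (because they all overlap at one link) genuinely occupy $h$ distinct wavelengths. To place one unit light-trail $[x,x+1]$ on wavelength $i$, it erects $Hill(i,x)$ to block wavelengths $0,\ldots,i-1$ at that link, inserts the unit transmission (now forced to wavelength $i$), and then deletes the hill; the pattern is built top-down (wavelength $k-1$ first), with an additional persistent hill parked on the right half so that the lower wavelengths stay non-empty while the left half is built, and vice versa. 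Your proposal never invokes departures at all inside the construction, builds bottom-up, and so cannot be repaired by merely ``making the offset argument precise''; it needs this different, temporal blocking mechanism.
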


\begin{proof}
For simplicity we assume $q=2^k$, i.e., $k$ is exactly equal to $\log
q$. The general case can be similarly proved.

Note that the pattern $P$ has non-overlapping and unit-length
light-trails. We first describe how to create a unit-length light-trail
$[x,x+1]$ on any wavelength $i$. We will repeatedly use this procedure
to create the pattern $P$. For this purpose, we define $Hill(h,x)$
to be an ordered sequence of $h$ transmissions as follows. For each
$i=0,1,\ldots,h-1$, $Hill(h,x)$ contains a transmission that uses
the link $[x,x+1]$ and has class $k-1-i$, and some suitable phase.
The key point is that all the transmissions in a hill overlap but
have different classes, and hence \alc must assign them in distinct
light-trails on different wavelengths.  Thus starting from scratch,
the arrival of the transmissions in a hill will cause $h$ wavelengths
to be allocated.  For example, we show $Hill(h=4,31)$ on right half
of Fig.~\ref{fig:allclass_lbpattern}\subref{parta}.  Further, if a new
transmission $[x,x+1]$ arrives, it will cause one more wavelength to be
allocated. From now on, by creating (deleting) a hill we mean the arrival
(departure) of transmissions in a hill.

Now we describe how to generate the pattern $P$ using several hills. The
idea is to build the portion of the pattern on one wavelength at a
time from top to bottom, i.e., first create all light-trails of $P$
on wavelength $k-1$, then all light-trails on wavelength $k-2$ and so
on. Each unit-length light-trail $[x,x+1]$ on an wavelength is created
by temporarily creating an appropriate hill underneath it, then creating
the transmission $[x,x+1]$ and finally deleting the temporary hill.
However, to make sure that all the wavelengths numbered $0,\ldots,k-2$
that were allocated when the part of the pattern on wavelength $k-1$
was created, remain alive throughout, we use the following trick. We
create left half of $P$ first. Then we create the right half.

Before creating the left half, we first create hill $H=Hill(k-1,
q-1)$. This hill will survive until the left half is completely
created. Its sole purpose is to keep $k-1$ wavelengths alive. The left
half is created top to bottom as given in Algorithm~\ref{alg:createP}.
Consider the first execution of the insertion marked as belonging to $P$.
Because of the hill $H'$, this transmission will clearly be assigned to a
light-trail on wavelength $i$.  Note further that when transmissions in
$H'$ depart, the wavelengths $0,\ldots,i-1$ do not become empty because
of the presence of hill $H$ in the right half.  Thus the subsequent
iterations also force the transmissions to be assigned in wavelength $i$,
and so on.

\begin{algorithm}[h!tb]
 \begin{algorithmic}
  \FOR{$i = k-1$ {\bf downto} 0}
    \FOR{$j = 0$ \TO $q/2k$}
      \STATE Create a hill $H'=Hill(i,jk+i)$
      \STATE Insert an arrival event for transmission $[jk+i,jk+i+1]$
      \COMMENT{belongs to $P$}
      \STATE Remove hill $H'$
    \ENDFOR
  \ENDFOR
 \end{algorithmic}
\caption{Create left half of $P$} \label{alg:createP}
\end{algorithm}

At the end of the above, we will have created a pattern as shown in
Fig.~\ref{fig:allclass_lbpattern}\subref{parta}.  Since each light-trail
contains only one transmission, we just show the transmissions instead
of the light-trails.

Next we remove $H$, and execute the same code to create the right half of
$P$ on the $k$ wavelengths already allocated. Note that the light-trails
created in the left half now serve the purpose that $H$ did earlier.
At this point we will have the complete pattern $P$.
\end{proof}

The first stage is created by using Lemma~\ref{lem:phase1} with $q=n$. In
the second stage, we can treat every $k=\floor{\log n}$ nodes as a
single node, and think of the network as having $n'=\floor{n/k}$ nodes.
We create a pattern of height $\floor{\log{n'}}$ similar to $P$ but
with light-trails of length $k$ using Lemma~\ref{lem:phase1} with
$q=n'$. Since these light-trails are longer than the light-trails in
the previous stage, we can stack up the new pattern on the top of the
previous pattern.  We can keep doing this until $n'$ becomes less than
$2$. It can be shown that the number of stages is $\Omega(\log_{\log n}
n) = \Omega(\log n/\log \log n)$.

Let $T(n)$ denote the total height of the patterns thus created for $n+1$
nodes, then $T(n)$ is computed using the following recurrence:
\begin{equation}
 T(n) = \floor{\log n} + T(\floor{n/\floor{\log n}}) \qquad
 \text{or simply} \qquad T(n) = \log n + T(n/\log n)
\end{equation}
with the base condition $T(n)=0$ for $n \le 1$. It can be shown that
the recurrence has solution $T(n) = \Omega(\log^2 n/\log \log n)$.

Thus \alc will use $\Omega(\log^2 n/\log \log n)$ wavelengths for the
patterns created. Fig.~\ref{fig:allclass_lbpattern}\subref{partb} shows
all the transmissions active at the end of all stages, for the example
considered in Fig.~\ref{fig:allclass_lbpattern}\subref{parta}.

\begin{figure*}[h!tb]
\centering
\subfloat[ ][Pattern created midway of the first stage] {
\includegraphics[width=\textwidth]{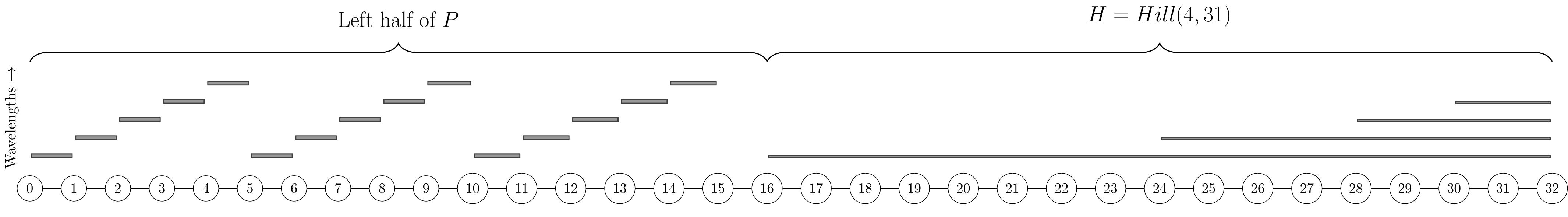}
\label{parta}
} \\
\subfloat[ ][Final Pattern] {
\includegraphics[width=\textwidth]{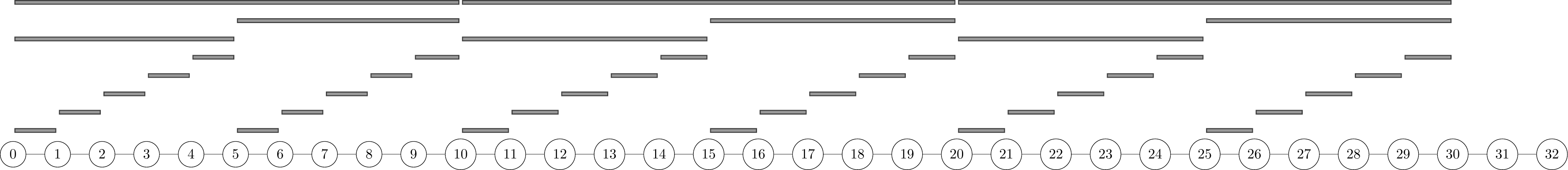}
\label{partb}
}
\caption{An instance for which \alc is $\Omega(\log^2 n/\log \log n)$ competitive}
\label{fig:allclass_lbpattern}
\end{figure*}

\subsection{Remarks}

It is interesting to note that \alc is more flexible than
\sepc, and it is this flexibility that is exploited in
the lower bound argument to show a worse ratio for \alc than
\sepc.

Indeed, the more flexibility we give, the worse it seems the ratio
will become.  In \alc, if we have a transmission of length $L=2^k$ we
assign it to a light-trail of length $2L$. This seems wasteful. But this
is done to accommodate transmissions that do not start at a multiple of
$L/4$, using only $4$ phases. Suppose we decide to be more flexible, and
allow light-trails to start anywhere (so long as their length is $2^k$
for some $k$) using $2^k$ phases.  Although this strategy will handle the
above transmission better, in general it is worse in that its competitive
ratio can be shown to be $\Omega(\log^2n)$.  We omit the details.

\section{Problem Lower Bound -- $\Omega(\log n)$}
\label{sec:problemlb}

Let \alg be some algorithm for the on-line problem and \opt be an optimal
off-line algorithm. By observing the behaviour of \alg we can create a
sequence of transmissions for which \alg takes $\Omega(\log{n})$ times
as many wavelengths as \opt.  This gives us the lower bound.

\begin{theorem}
There exists a transmission sequence for which any on-line algorithm
requires $\Omega(\log{n})$ times as many wavelengths as the optimal
off-line algorithm.
\label{thm:online_lb}
\end{theorem}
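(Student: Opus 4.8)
The plan is to play an adaptive adversary against \alg, forcing it into a situation analogous to the lower bound for \sepc in Theorem~\ref{thm:online1}, but now against an arbitrary on-line algorithm rather than a fixed one. First I would recall the key geometric constraint: a transmission $[s,d]$ of length roughly $2^i$ crosses some node that is a multiple of $2^{i-1}$, and transmissions that all cross a common node $u$ and are assigned to the same light-trail must share bandwidth on that light-trail across the link at $u$. So if we can force \alg to keep many mutually-overlapping transmissions alive with small total bandwidth, and \alg is nonetheless compelled to spread them over many wavelengths, we get a ratio. The trick, as in the \sepc example, is to use transmissions of geometrically growing lengths $[0,2^0],[0,2^1],\ldots,[0,2^{\log n}]$ all anchored at node $0$, each of bandwidth $1/(\log n+1)$: they all cross link $0$, their total congestion is $1$, so \opt places them all in one wavelength spanning $[0,n]$, yet they have $\Omega(\log n)$ distinct lengths.

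The subtlety is that for a \emph{fixed} algorithm we knew in advance it would separate different classes; for an arbitrary \alg we cannot assume that. So I would proceed adaptively and in rounds. Maintain a ``live interval'' that starts as $[0,n]$. In each round, present to \alg a batch of transmissions of one length on the current live interval; observe which wavelengths \alg uses; by a pigeonhole/counting argument there is a sub-interval (a constant fraction shorter, or shorter by a factor related to the batch size) on which \alg has used only a small number of wavelengths for that round, \emph{and} on which the previously-placed transmissions of that round still occupy their wavelengths. Recurse into that sub-interval with transmissions of the next length down. Because at each scale we can make the transmissions long enough to cross the relevant anchor node but short enough to fit inside the shrinking live interval, after $\Omega(\log n)$ rounds \alg will have been forced to allocate $\Omega(\log n)$ wavelengths (one fresh wavelength, or constantly many, per scale, since the transmissions of different scales mutually overlap over a common node and each individual light-trail can only absorb $O(\log n)$ of the tiny-bandwidth requests before we exhaust its capacity — but we keep the per-round count below that threshold). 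Meanwhile \opt uses $O(1)$ wavelengths because the entire final family of surviving transmissions pairwise overlaps at a single node with total bandwidth $O(1)$. Taking the ratio gives $\Omega(\log n)$.

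The main obstacle is the adaptive bookkeeping: making sure that when we recurse into a shorter live interval the transmissions we placed in earlier rounds really do stay on their wavelengths (so that \alg cannot ``reuse'' a wavelength across scales and thereby cheat the count), while simultaneously keeping the geometry consistent — the round-$t$ transmissions must cross a node $u_t$ interior to the live interval, and the live interval must shrink slowly enough to survive $\Omega(\log n)$ rounds yet fast enough that the pigeonhole step is valid. I would handle the ``stay alive'' issue exactly as in Lemma~\ref{lem:phase1}: keep a persistent ``anchor'' bundle of transmissions at the far end of the current live interval that pins down the wavelengths allocated in this round, so that they are not freed when we move the active window inward; these pinning transmissions also share the tiny bandwidth and common overlap, so they cost \opt nothing extra. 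A clean choice is to halve (or reduce by a constant factor) the live interval each round and use batches of constant size, so that each round forces at least one genuinely new wavelength; then after $k=\Theta(\log n)$ rounds \alg has $\Omega(\log n)$ wavelengths while \opt has $O(1)$, and finally release everything. The only real calculation is verifying the capacity threshold — that $\Theta(\log n)$ requests of bandwidth $1/(\log n + 1)$ do not all fit in one light-trail together with the $O(\log n)$ persistent pinning transmissions — which is the same arithmetic as in Section~\ref{sec:lb} and Theorem~\ref{thm:online1}.
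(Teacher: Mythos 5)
There is a genuine gap: your construction never actually forces an arbitrary on-line algorithm to open a new wavelength. Every transmission you present has bandwidth $1/(\log n+1)$, you keep only $O(\log n)$ of them alive at any time (constant-size batches plus the persistent pinning bundles), so the total active bandwidth is $O(1)$ throughout. A lazy adversary-proof algorithm that keeps a single light-trail $[0,n]$ on one wavelength and dumps every request into it never violates the capacity constraint and finishes with $O(1)$ wavelengths, so the ratio is $O(1)$. The nested-lengths example and the ``hill'' device from Lemma~\ref{lem:phase1} only force separation for the \emph{specific} algorithms \sepc and \alc, which separate by class by construction; against a general \alg a hill forces nothing. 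Your pigeonhole/recursion machinery controls \emph{where} wavelengths would be if they were opened, but you never supply the mechanism that compels \alg to open them, and with total bandwidth $O(1)$ no such mechanism can exist. Relatedly, your target of ``\opt uses $O(1)$, \alg uses $\Omega(\log n)$'' is not what the paper achieves, and for good reason: to strand \alg you must transiently present high congestion, which \opt must also pay for.

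The paper's proof works differently and the differences are exactly what your sketch is missing. In stage $i$ it presents $k^2$ transmissions of bandwidth $1/k$ on \emph{every} length-$2^i$ interval (congestion $k$ everywhere), so any algorithm must spread each interval's transmissions over at least $k$ light-trails. A bipartite-matching argument (Hall's theorem, Lemma~\ref{matchlemma}) then extracts, for each interval, $k$ survivors that \alg has committed to $k$ \emph{distinct} light-trails; all other transmissions depart. \opt, knowing the survivors in advance, packs all of stage $i$'s survivors onto one wavelength and reuses the other $k-1$ transient wavelengths in the next stage, peaking at $2k-1$. The final count is an efficiency argument: each of \alg's surviving light-trails holds at most one transmission of each length $1,2,4,\ldots$, hence total length at most $2l-1$ in a length-$l$ light-trail, i.e.\ efficiency $O(1/k)$; since the survivors have total length $nk^2$, \alg needs $\Omega(k^2)$ wavelengths, giving ratio $\Omega(k)=\Omega(\log n)$. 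The deletion-plus-matching step and the per-light-trail ``one transmission of each length'' capacity bound are the two ideas you would need to add to make an argument of this kind go through.
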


For convenience we assume that the network has $n+1$ nodes numbered $0, 1,
\ldots, n$ and $n=2^k$ for some $k$. The transmission sequence we use to
prove Theorem~\ref{thm:online_lb} is made of $k=\log n$ stages.  For stage
$i =0,1, \ldots, k-1$, consider the network broken up into $n'=n/l$
intervals of length $l=2^i$. Let this set of intervals be $Q_i =
\{[ql, (q+1)l]\}_{q = 0}^{n'-1}$.  At the beginning of the $i$th stage,
for each interval $I \in Q_i$, $k^2$ transmissions having extent
$I$ arrive.  We will denote this set of $n'k^2$ transmissions by $A_i$.
All transmissions have height (i.e., bandwidth requirement) of $1/k$.
At the end of the $i$th stage, all but a subset $S_i$ of $A_i$ depart.
The set $S_i$ is found with properties as per the following Lemma.

\begin{lemma}
Among the $n'k^2$ transmissions arriving at the beginning of stage $i$,
we can find a set $S_i$ of $n'k$ transmissions such that (1) Exactly $k$
transmissions from $S_i$ are for a single interval $I \in Q_i$,
(2) \alg assigns each transmission in $S_i$ to a distinct light-trail.
\label{matchlemma}
\end{lemma}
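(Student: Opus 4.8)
The plan is to build $S_i$ greedily, one interval at a time, using a counting/pigeonhole argument at each step. Fix the stage $i$ and let $Q_i = \{I_1, \ldots, I_{n'}\}$ be the intervals. For each interval $I_q$, exactly $k^2$ transmissions with extent $I_q$ arrive at the start of the stage; since all of them have the same extent $I_q$, any single light-trail that can serve one of them can serve all of them, and — crucially — \alg's choices are forced to be ``spread out'' only to the extent that bandwidth $1/k$ forces at most $k$ transmissions into any one light-trail (as their total would then reach $1$). So among the $k^2$ transmissions arriving for $I_q$, \alg uses at least $k$ \emph{distinct} light-trails; pick one transmission from each of $k$ distinct light-trails and call this set $S_i^{(q)}$. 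Taking $S_i = \bigcup_{q=1}^{n'} S_i^{(q)}$ gives $|S_i| = n'k$ with exactly $k$ transmissions per interval, which is property~(1), and within a single interval the $k$ chosen transmissions lie in distinct light-trails by construction.

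The remaining work is property~(2): that \emph{across} intervals the chosen light-trails are all distinct, i.e., no light-trail of \alg is counted for two different intervals $I_q \neq I_{q'}$. Here I would argue directly from the geometry: if a single light-trail $L$ carried a chosen transmission of extent $I_q$ and a chosen transmission of extent $I_{q'}$, then $L$ must contain both $I_q$ and $I_{q'}$. If $I_q, I_{q'}$ are disjoint (distinct intervals of the same length $l=2^i$ from the partition $Q_i$ are disjoint), then $L$ spans the gap between them and so has length strictly greater than $l$. Such a ``long'' light-trail carries two transmissions of total bandwidth $2/k$, which is fine capacity-wise, so I cannot rule it out purely by congestion — instead I would use a refined selection: when I pick the $k$ transmissions for interval $I_q$, I pick them from the $k$ distinct light-trails that are \emph{used earliest} (or simply track, for each light-trail, which interval first ``claimed'' it), and note that because at the start of stage $i$ all $n'k^2$ transmissions arrive together, \alg processes them in some order; a light-trail becomes eligible to be ``claimed'' by at most one interval among the ones whose transmissions it serves if we always assign each serving light-trail to the \emph{first} interval (in \alg's processing order) one of whose transmissions it carries. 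With that bookkeeping, each light-trail is charged to a unique interval, and since each interval still contributes $k$ transmissions in $k$ light-trails charged to it, the total $n'k$ light-trails are pairwise distinct.

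The main obstacle — and the step I would spend the most care on — is precisely this cross-interval disjointness: showing that the "budget" of light-trails \alg is forced to open is genuinely $n'k$ and not smaller because one wide light-trail could absorb demand from several intervals at once. The resolution hinges on the observation that $k^2 \gg k$ gives slack: even if every light-trail serving $I_q$ were also serving some other interval, the $k^2$ transmissions of $I_q$ still occupy $\geq k$ distinct light-trails, and by charging each light-trail to its first claiming interval we lose at most a constant factor per interval — and one can recover the clean bound $n'k$ by choosing $k^2$ large enough relative to $k$ (which it is). I would state the charging argument carefully, verify that an interval is never starved of its $k$ light-trails after the charging, and conclude that $S_i$ with the claimed size and both properties exists. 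The time for \alg to be "observed" is not an issue since this is a lower-bound construction: we simulate \alg on the arrivals and read off its assignments.
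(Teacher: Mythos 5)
Your per-interval counting step is fine: since every transmission has bandwidth $1/k$, a light-trail holds at most $k$ of them, so the $k^2$ transmissions of any single interval occupy at least $k$ distinct light-trails. But, as you yourself flag, the entire difficulty of the lemma is the \emph{cross-interval} distinctness, and the mechanism you propose for it --- charge each light-trail to the first interval that ``claims'' it, and hope that no interval is starved --- does not work. Here is a concrete failure for $k=2$, $n'=2$: suppose \alg places one transmission of $I_1$ and one of $I_2$ in each of two long light-trails $L_1,L_2$ (each spanning both intervals), the remaining two transmissions of $I_1$ in $L_3$, and the remaining two of $I_2$ in $L_4$. A valid selection exists ($I_1\mapsto\{L_1,L_3\}$, $I_2\mapsto\{L_2,L_4\}$), but if $I_1$ greedily claims $L_1$ and $L_2$ first, then $I_2$ has only $L_4$ left and cannot furnish $k=2$ distinct light-trails. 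Your fallback --- ``recover the bound by choosing $k^2$ large enough relative to $k$'' --- is not available either: the number $k^2$ is pinned down by the rest of the lower-bound argument (\opt must be able to pack each $A_i$ into $k$ wavelengths, and $S_i$ must contain exactly $k$ transmissions per interval so that \opt can retain them on one wavelength), so you cannot inflate it.

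What is needed is a global selection argument, and this is exactly what the paper does: partition the $k^2$ transmissions of each interval into $k$ groups of $k$, form the bipartite graph whose left vertices are the $n'k$ groups (each of degree exactly $k$, one edge per transmission, pointing to the light-trail \alg used) and whose right vertices are the light-trails (each of degree at most $k$ by the capacity bound). For any set $S$ of groups, the $|S|k$ edges leaving $S$ land in a neighborhood $T$ that can absorb at most $|T|k$ edges, so $|T|\ge|S|$; Hall's theorem then yields a matching saturating all groups, i.e., a system of distinct light-trail representatives, one transmission per group. That simultaneously gives $k$ transmissions per interval and global distinctness of the light-trails. Your proof as written is missing this idea (or an equivalent augmenting-path/deficiency argument), and the step it substitutes in its place is false.
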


\begin{proof}
We have $k^2$ transmissions for each interval $I \in Q_i$.
Partition these $k^2$ transmissions arbitrarily into $k$ groups of $k$
transmissions each. So overall we have $n'k$ groups each containing $k$
transmissions. Now form a bipartite graph $(U,V,E)$ as follows.
\begin{enumerate}
\item $U$ has $n'k$ vertices, each vertex corresponding to a group of $k$
transmissions as formed above.  Note that there are $k$ groups for each
interval, and hence we can consider a distinct group of $k$ vertices of
$U$ to be associated with each $I \in Q_i$.
\item $V$ has a vertex corresponding to each light-trail used by
\alg for serving the transmissions of this stage.
\item $E$ has following edges. Suppose a transmission $t$ from the group
associated with a vertex $u\in U$ is placed by \alg in the light-trail
$L$ associated with a vertex $v \in V$.  Then for each such $t$ there
will be an edge $(u,v)$ in $E$.  Note that this may produce parallel
edges if several transmissions in the group of $u$ are placed in $L$.
\end{enumerate}
The degree of each vertex in $U$ is exactly $k$, one edge for each
transmission in the associated group.  Consider any vertex $v\in V$. Since
its associated light-trail can accommodate at most $k$ transmissions of
height $1/k$, its degree must be at most $k$.

Now consider any subset $S$ of vertices from $U$ and its neighborhood $T$
in $V$.  Because vertices in $U$ have degree exactly $k$ there must be
exactly $|S|k$ edges leaving $S$.  These must be a subset of the edges
entering $T$. But vertices in $V$ have degree at most $k$. So there
can be at most $|T|k$ edges entering $T$.  Thus we have $|S|k\le |T|k$,
i.e., $|T|\ge |S|$, i.e., $S$ has at least as many neighbors as its own
cardinality.  But this is true for any $S$.  Thus by the generalization
of Hall's theorem, there must be a matching $M$ that includes an edge
from every vertex of $U$ to a distinct vertex in $V$.

Consider the set $S_i$ of transmissions associated with each edge of $M$.
Since there is exactly one edge in $M$ for each node in $U$, $S_i$ has
one transmission per group of transmissions for each interval. Hence
$S_i$ has exactly $k$ transmissions for each interval.  Since $M$ has
exactly one edge per vertex in $V$, we know that each transmission in
$S_i$ is assigned to a distinct light-trail by \alg.
\end{proof}

We have now completely described the transmission sequence.  At the end all
transmissions have departed except those in some $S_i$.  We will use $D_i$
to denote the transmissions which depart in stage $i$.  Clearly
$A_i=S_i\cup D_i$.

\begin{lemma}
\opt uses overall $2k-1$ wavelengths while processing the
transmission sequence for all stages.
\end{lemma}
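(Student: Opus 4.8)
The plan is to pin down \opt exactly by matching a congestion lower bound of $2k-1$ against an explicit off-line schedule that uses only full-length light-trails $[0,n]$.

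\textbf{Lower bound.} First I would evaluate the congestion at the instant inside stage $k-1$ right after all of $A_{k-1}$ has arrived and before any stage-$(k-1)$ departure. The live transmissions are then $S_0,\dots,S_{k-2}$ together with $A_{k-1}=S_{k-1}\cup D_{k-1}$. Fix any link $l$. For each $j\le k-2$, $S_j$ contains exactly $k$ transmissions of height $1/k$ on the interval of $Q_j$ containing $l$ (property (1) of the preceding lemma), so $S_j$ contributes congestion exactly $1$ at $l$; and $A_{k-1}$ contains $k^2$ transmissions of height $1/k$ on the interval of $Q_{k-1}$ containing $l$, contributing congestion $k$. Hence the congestion at $l$ at that instant is $(k-1)\cdot 1+k=2k-1$, so the congestion of the whole sequence is at least $2k-1$, and since each wavelength carries at most bandwidth $1$ across $l$, every schedule---\opt included---uses at least $2k-1$ wavelengths.

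\textbf{Upper bound.} Next I would exhibit a feasible off-line schedule with exactly $2k-1$ wavelengths, every light-trail being $[0,n]$. Two packing facts do the work: (i) $S_j$ fits into one light-trail $[0,n]$, since it has congestion $1$ at every link; (ii) $D_i$ has $k^2-k=(k-1)k$ transmissions of height $1/k$ on each interval of $Q_i$, hence splits exactly into $k-1$ light-trails $[0,n]$, each holding $k$ transmissions per interval and thus of congestion $1$. Reserve wavelengths $1,\dots,k$ for the survivors and wavelengths $k+1,\dots,2k-1$ for the departing parts. When $A_i$ arrives at the start of stage $i$, route $S_i$ into the light-trail $[0,n]$ on wavelength $i+1$ and route $D_i$ into the $k-1$ light-trails $[0,n]$ on wavelengths $k+1,\dots,2k-1$. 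Wavelength $i+1$ is free just before stage $i$ because $S_0,\dots,S_{i-1}$ occupy wavelengths $1,\dots,i$ while $i+1,\dots,k$ are empty, and wavelengths $k+1,\dots,2k-1$ are free because the departing parts of distinct stages are never simultaneously alive. During stage $i$ this uses $(i+1)+(k-1)=i+k\le 2k-1$ wavelengths, and the set of wavelengths ever allocated is exactly $\{1,\dots,2k-1\}$; no transmission is split, so the bound holds in both the splittable and non-splittable models. Together with the lower bound, \opt uses exactly $2k-1$.

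\textbf{Main obstacle.} I do not expect any genuinely hard step; the only thing requiring care is checking feasibility \emph{at every instant} rather than just at the congestion peak---in particular that each $S_j$ may sit on its fixed wavelength for its whole lifetime (immediate, as it owns that wavelength) and that the wavelengths designated for $A_i$ are actually idle when stage $i$ begins (the bookkeeping above). The exact split of $A_i$ into $1+(k-1)$ full-length light-trails relies precisely on the per-interval counts being $k$ for $S_i$ and $(k-1)k$ for $D_i$.
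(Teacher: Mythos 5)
Your lower-bound half is fine (each surviving $S_j$ contributes congestion exactly $1$ at every link and $A_{k-1}$ contributes $k$, so $2k-1$ is indeed also a lower bound), but the upper-bound construction---which is the half the lemma actually needs for the competitive-ratio argument---rests on a misreading of the model. The capacity constraint on a light-trail is that the \emph{sum of the bandwidths of all transmissions assigned to it} is at most $1$, irrespective of whether those transmissions overlap spatially; it is not a per-link congestion constraint. See the model description in Section~\ref{sec:snap} (``the sum of the required bandwidths of all transmissions assigned to any single light-trail must not exceed the capacity of a wavelength'') and the example of Fig.~\ref{fig:ltrpinst}, where the spatially disjoint transmissions $[0,1]$ and $[1,2]$ cannot share one light-trail together with $[0,2]$ because their total bandwidth exceeds $1$. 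A light-trail is a bus: transmissions on disjoint segments of it still share the single wavelength over its entire extent. Consequently a full-length light-trail $[0,n]$ can hold at most $k$ of the height-$1/k$ transmissions \emph{in total}, not $k$ per interval; your claim that $S_j$ (total bandwidth $n'\cdot k\cdot\tfrac1k = n/2^j$) fits into one light-trail $[0,n]$ ``since it has congestion $1$ at every link'' fails for every $j<k$, and the same objection kills the $k-1$ full-length light-trails you propose for $D_i$.

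The repair is exactly what the paper's proof does: for stage $i$, configure each of the $k$ wavelengths into $n/2^i$ \emph{separate} light-trails, one per interval $I\in Q_i$, by setting the shutters OFF at the interval endpoints. Each such light-trail then receives only the $k$ transmissions whose extent is $I$, for a total bandwidth of $k\cdot(1/k)=1$, which is feasible; one such wavelength carries $S_i$ (possible precisely because $S_i$ has exactly $k$ transmissions per interval) and $k-1$ such wavelengths carry $D_i$. Your bookkeeping of which wavelengths are free when---survivors parked permanently on wavelengths $1,\dots,k$, the departing parts recycled on $k+1,\dots,2k-1$ stage after stage, peak usage $i+k\le 2k-1$ during stage $i$---is correct and coincides with the paper's count, so once the light-trail configuration is fixed the argument goes through.
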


\begin{proof}
Consider stage $i$.  The set $A_i$ has $k^2$ transmissions for each
interval $I \in Q_i$.  To serve these transmissions $A_i$, \opt uses $k$
wavelengths configured as follows.  Each wavelength is configured into
light-trails as per $Q_i$, i.e., each interval $I \in Q_i$ forms one
light-trail.  Now the key point is that \opt places all transmissions
in $S_i$ into light-trails on a single wavelength.  This can be done
because the set $S_i$ indeed has $k$ transmissions for each $I\in Q_i$.
The remaining transmissions $D_i$ can be accommodated into $k-1$
additional wavelengths.  Note now that at the end of the stage,
the transmissions $D_i$ depart.  Hence although the stage used $k$
wavelengths transiently, at the end $k-1$ of these are released.

Thus, at the end of stage $i$, there will be $i+1$ wavelengths in use,
one for transmissions in each $S_j$, $j=0,\ldots,i$.  When $A_{i+1}$
arrives, \opt will allocate $k$ new wavelengths.  So while processing
$A_{i+1}$ there will be $i+1+k$ wavelengths in use.  These will drop
down to $i+2$ at the end of stage $i+1$.  Thus, over all the stages the
maximum number of wavelengths used will be at most $\max_{i=0,\ldots,k-1}
(i+k)$, i.e. $2k-1$.
\end{proof}

\begin{lemma}
\alg uses at least $k^2/2$ wavelengths while processing the
transmission sequence.
\end{lemma}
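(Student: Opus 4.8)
The plan is to keep the adversary adaptive (Lemma~\ref{matchlemma} already is) and lower‑bound \alg by tracking, throughout the run, how many wavelengths carry a non‑empty light‑trail across one fixed link. Take the leftmost link, link~$0$. Since node~$0$ is the leftmost node, every light‑trail crossing link~$0$ starts at node~$0$, so each wavelength holds at most one light‑trail crossing link~$0$, and counting such light‑trails is the same as counting wavelengths in use across link~$0$. For every stage~$j$ the interval $[0,2^j]$ is in $Q_j$, so $A_j$ brings $k^2$ transmissions of extent $[0,2^j]$, each of height $1/k$; a light‑trail has capacity~$1$, so \alg must spread them over at least $k$ distinct light‑trails, all spanning $[0,2^j]$ and hence crossing link~$0$. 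By Lemma~\ref{matchlemma} the $k$ members of $S_j$ of extent $[0,2^j]$ occupy $k$ distinct light‑trails; these sit on $k$ distinct wavelengths (they all cross link~$0$) and never become empty, so these wavelengths stay occupied across link~$0$ for the rest of the run. Call them the wavelengths \emph{committed} by stage~$j$.

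Now look at the instant during the last stage~$k-1$ when $A_{k-1}$ is present. The transmissions crossing link~$0$ are then: the $k$ members of $S_j$ of extent $[0,2^j]$ for each $j=0,\dots,k-2$, living in the light‑trails committed by the earlier stages, plus the $k^2$ members of $A_{k-1}$ of extent $[0,2^{k-1}]$, which need at least $k$ light‑trails spanning $[0,2^{k-1}]$. The key dichotomy: if \alg committed the earlier‑stage light‑trails ``tightly'' (each of extent of length $<2^{k-1}$), then none of them can absorb a member of $A_{k-1}$, so those $k$ light‑trails are brand new; moreover, since a light‑trail of extent $[0,2^{j'}]$ and one of extent $[0,2^{j''}]$ overlap at link~$0$ and hence lie on different wavelengths, the committed light‑trails of the different earlier stages are pairwise on distinct wavelengths, giving $(k-1)k$ of them. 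If instead \alg made many of them ``wide'' (extent $\supseteq[0,2^{k-1}]$) so as to reuse them, then by stage~$k-1$ each such wide light‑trail is already carrying one old $S_{j'}$‑member for several values of $j'$, so its residual capacity is small and absorbing the $k^2$ members of $A_{k-1}$ forces proportionally many light‑trails crossing link~$0$. Either extreme yields $\Omega(k^2)$ wavelengths crossing link~$0$ at that instant, and a mixed strategy only interpolates; a careful count pins the constant at $1/2$.

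The step I expect to be the real obstacle is precisely handling the mixed strategy: \alg is oblivious and may hedge, using light‑trails wider than anything yet seen and mixing tight and wide light‑trails even for the same interval, so one cannot just split into the two cases above. The clean route is a potential argument: assign each existing light‑trail a weight equal to the minimum of its residual capacity times~$k$ and the number of later stages~$j$ for which $[0,2^j]$ still fits inside it (i.e.\ the number of future batches of transmissions of extent $[0,2^j]$ it can still help serve), show that serving each batch $A_j$ consumes only $O(k)$ units of potential while forcing $k$ light‑trails across link~$0$, and conclude that the number of wavelengths across link~$0$ grows by $\Omega(k)$ per stage, hence $\Omega(k^2)$ over the $k$ stages. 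Converting the elementary fact that width costs capacity faster than it buys reusability (the inequality $2^{x}\ge x+1$) into this potential bound is the technical core.
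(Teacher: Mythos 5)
There is a genuine gap: the quantity you are lower-bounding --- the number of wavelengths carrying a non-empty light-trail across the single link $0$ --- is only $\Theta(k)$, never $\Omega(k^2)$, so no amount of care with the mixed strategy can rescue the argument. Concretely, the surviving transmissions that cross link $0$ are the $k$ members of $S_j$ of extent $[0,2^j]$ for each $j$, i.e.\ $k^2$ transmissions of total bandwidth $k^2\cdot(1/k)=k$; and even transiently, while $A_{k-1}$ is present, the congestion at link $0$ is only $2k-1$. A single light-trail of extent $[0,n]$ can legally hold one survivor from each of the $k$ stages (total bandwidth $k\cdot(1/k)=1$), so \alg can serve stage $0$ with $k$ light-trails $[0,n]$ on $k$ wavelengths and then absorb all later link-$0$ survivors into those same light-trails, adding only $O(1)$ new wavelengths per stage at link $0$. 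Your ``wide'' branch therefore yields $O(k)$, not $\Omega(k^2)$, and the claim that ``absorbing the $k^2$ members of $A_{k-1}$ forces proportionally many light-trails'' fails because those $k^2$ members have total bandwidth only $k$ and fit into $k$ light-trails. The committed wavelengths of different stages simply need not be distinct.

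The paper's proof is global rather than per-link, and this is essential. It looks at all light-trails alive at the end and uses the fact (from Lemma~\ref{matchlemma}) that each light-trail contains at most one survivor from each $S_i$; since the $S_i$-survivors have pairwise distinct lengths $1,2,4,\dots$, a light-trail of length $l$ carries survivors of total length at most $1+2+\cdots+l=2l-1$, i.e.\ it is filled to ``area'' at most $(2l-1)/k$ out of $l$. Summed over a whole wavelength this means each wavelength carries surviving transmissions of total length at most $2n$, whereas the survivors over the \emph{entire} network have total length $nk^2$ ($n'k=nk/2^i$ transmissions of length $2^i$ per stage, $nk$ per stage, $k$ stages). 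Dividing gives $k^2/2$. The $k^2$ comes from aggregating the inefficiency of light-trails everywhere in the array, not from piling up congestion at one link --- the construction deliberately keeps the congestion at every link at $O(k)$ so that \opt stays cheap. If you want to salvage a local viewpoint, you would have to charge each wavelength for the links it wastes across the whole ring, which is exactly the paper's area argument in disguise.
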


\begin{proof}
Consider the light-trails used by \alg which are active at the end of the
stage $k-1$.  Each of these light-trails may contain several transmissions
but only one transmission from each $S_i$.  Since transmissions from each
$S_i$ have different lengths, each light-trail must hold transmissions
of different lengths.  Thus, each light-trail can have at most one
transmission of length $1$, one of length 2, and so on.  The sum of the
lengths of the transmissions assigned to a single light-trail of length
$l$ is thus at most $1+2+4+\dots+l=2l-1$.  In other words, if we think
each light-trail of length $l$ has capacity of $l \times 1 =l$ units
and each transmission of length $l'$ uses $l'\times 1/k=l'/k$ units then
the light-trail is being used to an efficiency of $((2l-1)/k)/l \le 2/k$.

In general, a single wavelength, however it is partitioned into
light-trails, can accommodate light-trails of total capacity $n \times
1 =n$ units.  If each light-trail is used to efficiency only $2/k$,
then each wavelength can hold transmissions of total capacity at most $n
\times 2/k = 2n/k$. Since each transmission of unit length uses $1/k$
unit capacity, each wavelength can hold transmissions of total length
$(2n/k)/(1/k)=2n$. However, the transmissions that survive at the end
consist of $nk$ transmissions of length $1$, $nk/2$ transmissions of
length $2$, and so on to $2k$ transmissions of length $n/2$.  Thus the
total length is $nk^2$.  Thus \alg needs at least $(nk^2)/(2n)=k^2/2$
wavelengths at the end.
\end{proof}

But the maximum number of wavelengths needed by \opt is $2k-1$,
hence the competitive ratio is at least $k/4 = \Omega(\log n)$. This
completes the proof of Theorem~\ref{thm:online_lb}.


\section{Simulations}
\label{sec:simul}

We simulate our two on-line algorithms and a baseline algorithm on a
pair of oppositely directed rings, with nodes numbered 0 through $n-1$
clockwise.

We use slightly simplified versions of the algorithms described in
Section~\ref{sec:online} (but easily seen to have the same bounds):
basically we only use phases $0$ and $2$.  Any transmissions that would
go into class $i$ phase $1$ (or phase $3$) light-trail are contained
in some class $i+1$ light-trail (of phase $0$ or $2$ only), and are put
there.  We define a class $i$ and phase $0$ light-trail to be one that
is created by putting OFF shutters at nodes $jn/2^i$ for different $j$,
suitably rounding when $n$ is not a power of $2$. A light-trail with
class $i$ and phase $2$ is created by putting OFF shutters at nodes
$(jn/2^i + n/2^{i+1})$, again rounding suitably. The class and phase
of a transmission is determined by the light-trail of maximum class
(note that now larger classes have shorter light-trails) and minimum
phase that can completely accommodate it. For \alc, there
is a similar simplification.  Basically, we use light-trails having
end nodes at $jn/2^i$ and $(j+1)n/2^i$ or at $jn/2^i + n/2^{i+1}$ and
$(j+1)n/2^i + n/2^{i+1}$.  As before, in \sepc, we require
any wavelength to contain light-trails of only one class and phase;
whereas in \alc, a wavelength may contain light-trails of
different classes and phases.

For the baseline algorithm in each ring we use a single OFF shutter
at node 0.  Transmissions from lower numbered nodes to higher numbered
nodes use the clockwise ring, and the others, the counterclockwise ring.

\subsection{The Simulation Experiment}
\label{sec:req_generation}

A single simulation experiment consists of running the algorithms on a
certain load, characterized by parameters $\lambda,D,r_{min}$ and $\alpha$
for $100$ time steps.  In our results, each data-point reported is the
average of $150$ simulation experiments with the same load parameters.

In each time step, all nodes $j$ that are not busy transmitting, generate
a transmission $(j,d_j,r_j)$ active for $t_j$ time units.  After that
the node is busy for $t_j$ steps.  After that it generates another
transmission as before.  The transmission duration $t_j$ is drawn from
a Poisson distribution with parameter $\lambda$. The destination $d_j$
of a transmission is picked using the distribution $D$ discussed later.
The bandwidth is drawn from a modified Pareto distribution with scale
parameter $=100 \times r_{min}$ and shape parameter $=\alpha$. The
modification is that if the generated bandwidth requirement exceeds the
wavelength capacity $1$, it is capped at $1$.

We experimented with $\alpha=\{1.5,2,3\}$ and $\lambda=\{0.01,0.1\}$ but
report results for only $\alpha=1.5$ and $\lambda=0.01$; results for other
values are similar.  We tried four values $0.01, 0.1, 0.25$ and $0.5$
for $r_{min}$.
We considered four different distributions $D$ for selecting the
destination node of a transmission.
\begin{enumerate}
\item \emph{Uniform}: we select a destination uniformly randomly from
the $n-1$ nodes other than the source node.
\item \emph{UniformClass}: we first choose a class uniformly from the
$\lceil \log n/2 \rceil+1$ possible classes and then choose a destination
uniformly from the nodes possible for that class. It should be noted that
there can be a destination at a distance at most $n/2$ in any direction
since we schedule along the direction requiring shortest path.
\item \emph{Bimodal}: first we randomly choose one of two possible
modes. In mode 1, a destination from the two immediate neighbors is
selected and in mode 2, a destination from the nodes other than the two
immediate nodes is chosen uniformly. For applications where transmissions
are generated by structured algorithms, local traffic, i.e., unit or
short distances (e.g. $\sqrt{n}$ for mesh like communications) would
dominate. Here, for simplicity, we create a bimodal traffic which is
mixture of completely local and completely global.
\item \emph{ShortPreferred}: we select destinations at shorter distance
with higher probability. In fact, we first choose a class $i$ in the range
$0,\ldots,\lceil \log n/2 \rceil$ with probability $\frac{1}{2^{i+1}}$ and
then select a destination uniformly from the possible destinations in that
class.
\end{enumerate}

We report the results only for the distributions \emph{Uniform}
and \emph{Bimodal} and for $r_{min}=0.01,0.5$, i.e., total 4 load
scenarios. Results for other scenarios follow similar pattern.

\subsection{Results}

Fig.~\ref{fig:simres} shows the results for the 4 load scenarios.
For each scenario, we report the number of wavelengths required
by the 3 algorithms and the measured congestion as defined in
Section~\ref{sec:online}.  Each data-point is the average of $150$
simulations (each of $100$ time steps) for the same parameters on
rings having $n=5,6,\ldots,20$ nodes.  We say that the two scenarios
corresponding to $r_{min}=0.01$ have {\em low load} and the remaining
two scenarios ($r_{min}=0.5$) have {\em high load}.

\begin{figure*}[h!tb]
 \centering
 \subfloat[Low Load]{\label{fig:sim1}\includegraphics[width=\textwidth]{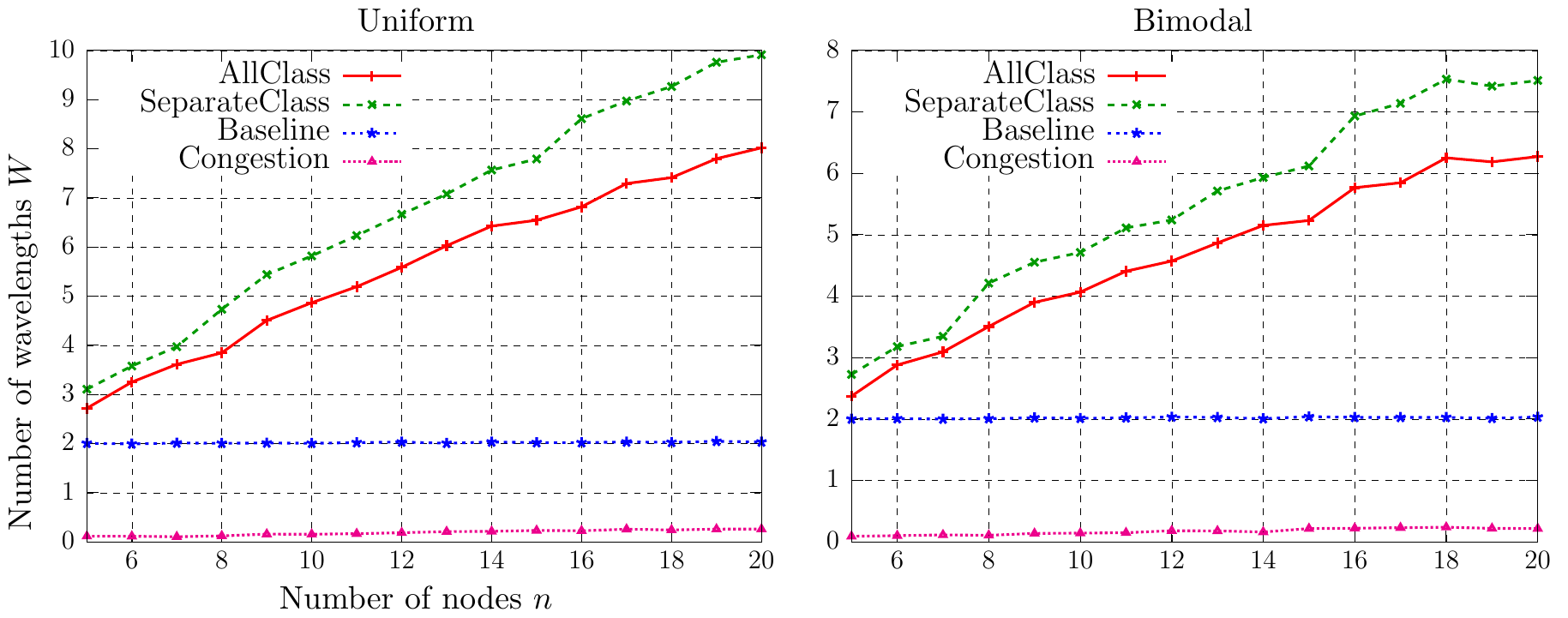}}\\
 \subfloat[High Load]{\label{fig:sim3}\includegraphics[width=\textwidth]{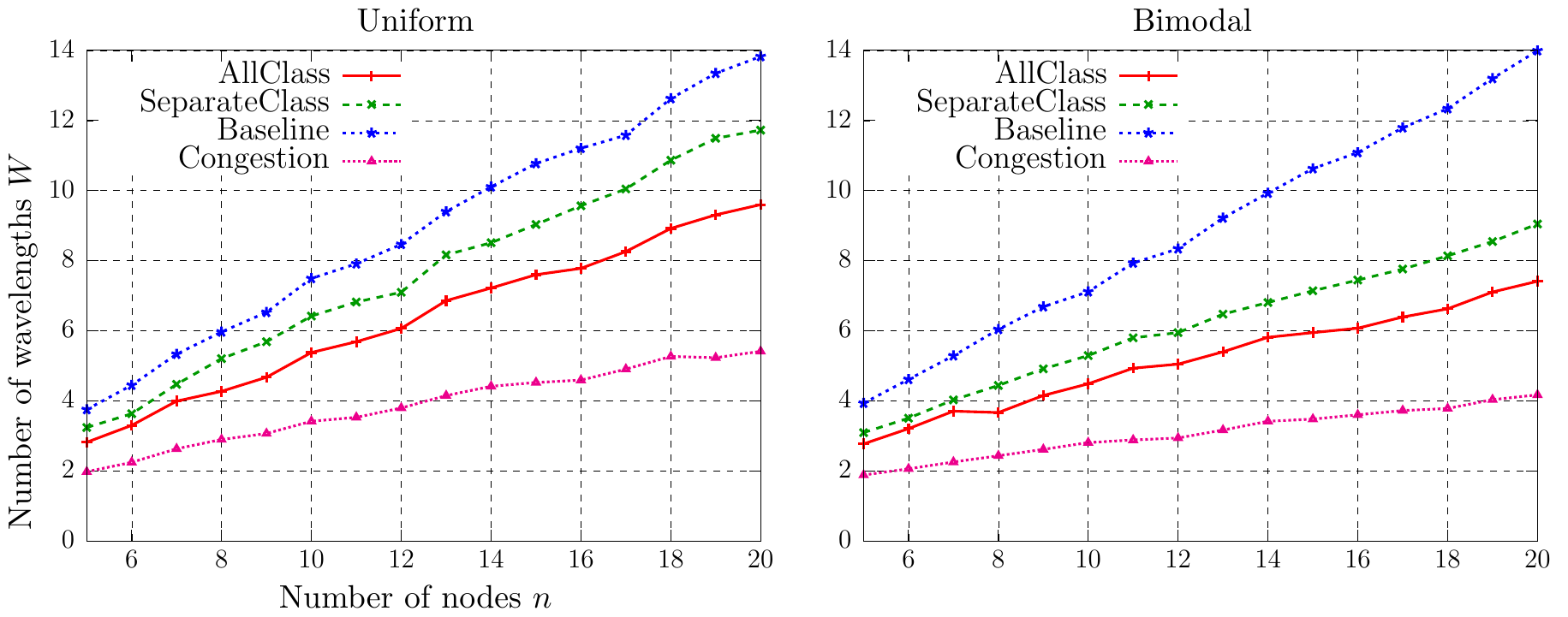}}\\
 \caption{Simulation results}
 \label{fig:simres}
\end{figure*}

For low load, the baseline algorithm outperforms our algorithms.
At this level of traffic, it does not make sense to reserve different
light-trails for different classes.  However, as load increases our
algorithms outperform the baseline algorithm.

For the same load, it is also seen that our algorithms become more
effective as we change from the completely global {\em Uniform}
distribution to the more local {\em Bimodal} distribution.  This trend
was also seen with the other distributions we experimented with.

Though we showed analytically that \alc is not better than
\sepc always, our simulation shows that \alc
performs better on the type of loads we generated.  It may be noted
that our algorithm for the stationary case mixes up the light-trails of
different classes, and so suggests that the \alc might work
better in many practical scenarios.

\section{Conclusions and Future Work}

It can be shown that the non-splittable stationary problem is
NP-hard, using a simple reduction from bin-packing.  We do not know
if the splittable problem is also NP-hard. We gave an algorithm
for both variations of the stationary problem which takes $O(c+\log
n)$ wavelengths.  It will also be useful to improve the lower bound
arguments; as Section~\ref{sec:lb} shows, congestion is not always a
good lower bound.  This may lead to a constant factor approximation
algorithm for the problem.

In the on-line case we proved that the lower bound on the competitive
ratio of any algorithm is $\Omega(\log n)$ and gave a matching algorithm
which we proved to have competitive ratio $\Theta(\log n)$. We also gave
a second algorithm which seems to work better in practice but can be
as bad as $\Omega(\log^2 n/\log \log n)$ factor worse than an optimal
off-line algorithm on some pathological examples as we have shown. We
also proved an upper bound of $O(\log^2 n)$ for the algorithm but it
will be an interesting problem to close the gap between the two bounds.

Our on-line model is very conservative in the sense that once a
transmission is allocated on a light-trail, the light-trail cannot
be modified. However, other models allow light-trails to shrink/grow
dynamically~\cite{gumaste2004lto}.  It will be useful to incorporate this
(with some suitable penalty, perhaps) into our model.

It will also be interesting to devise special algorithms that work well
given the distribution of arrivals.

\section*{Acknowledgment}

We would like to thank Ashwin Gumaste for encouragement, insightful
discussions and patient clearing of doubts related to light-trails.

\bibliographystyle{unsrt}
\bibliography{lighttrail,ip,graph}

\begin{thebibliography}{10}

\bibitem{chlamtac2003light}
I.~Chlamtac and A.~Gumaste.
\newblock {Light-trails: A solution to IP centric communication in the optical
  domain}.
\newblock {\em Lecture notes in computer science}, pages 634--644, 2003.

\bibitem{borodin1998oca}
A.~Borodin and R.~El-Yaniv.
\newblock {\em {Online computation and competitive analysis}}.
\newblock Cambridge University Press, New York, NY, USA, 1998.

\bibitem{sleator1985amortized}
D.D. Sleator and R.E. Tarjan.
\newblock {Amortized efficiency of list update and paging rules}.
\newblock {\em Communications of the ACM}, 28(2):202--208, 1985.

\bibitem{karlin1988competitive}
A.R. Karlin, M.S. Manasse, L.~Rudolph, and D.D. Sleator.
\newblock {Competitive snoopy caching}.
\newblock {\em Algorithmica}, 3(1):79--119, 1988.

\bibitem{gumaste2003mil}
A.~Gumaste and I.~Chlamtac.
\newblock {Mesh implementation of light-trails: a solution to IP centric
  communication}.
\newblock {\em Proceedings of the 12th International Conference on Computer
  Communications and Networks, ICCCN'03}, pages 178--183, 2003.

\bibitem{gumaste2004optimizing}
A.~Gumaste, G.~Kuper, and I.~Chlamtac.
\newblock {Optimizing light-trail assignment to WDM networks for dynamic IP
  centric traffic}.
\newblock In {\em The 13th IEEE Workshop on Local and Metropolitan Area
  Networks, LANMAN'04}, pages 113--118, 2004.

\bibitem{ye2005traffic}
Y.~Ye, H.~Woesner, R.~Grasso, T.~Chen, and I.~Chlamtac.
\newblock {Traffic grooming in light trail networks}.
\newblock In {\em IEEE Global Telecommunications Conference, GLOBECOM'05},
  2005.

\bibitem{multihop}
A.~Gumaste, J.~Wang, A.~Karandikar, and N.~Ghani.
\newblock {MultiHop Light-Trails (MLT) - A Solution to Extended Metro
  Networks}.
\newblock {\em Personal Communication}.

\bibitem{balasubramanian2005ltn}
S.~Balasubramanian, W.~He, and AK~Somani.
\newblock {Light-Trail Networks: Design and Survivability}.
\newblock {\em The 30th IEEE Conference on Local Computer Networks}, pages
  174--181, 2005.

\bibitem{gumaste2004lto}
A.~Gumaste and I.~Chlamtac.
\newblock {Light-trails: an optical solution for IP transport [Invited]}.
\newblock {\em Journal of Optical Networking}, 3(5):261--281, 2004.

\bibitem{fang:olt}
Jing Fang, Wensheng He, and A.K. Somani.
\newblock {Optimal light trail design in WDM optical networks}.
\newblock In {\em IEEE International Conference on Communications}, volume~3,
  pages 1699--1703, June 2004.

\bibitem{gumaste2007hao}
A.~Gumaste and P.~Palacharla.
\newblock {Heuristic and optimal techniques for light-trail assignment in
  optical ring WDM networks}.
\newblock {\em Computer Communications}, 30(5):990--998, 2007.

\bibitem{ayad2007eoa}
A.S. Ayad, K.M.F. Elsayed, and S.H. Ahmed.
\newblock {Enhanced Optimal and Heuristic Solutions of the Routing Problem in
  Light Trail Networks}.
\newblock {\em Workshop on High Performance Switching and Routing, HPSR'07},
  pages 1--6, 2007.

\bibitem{wu2006opn03}
B.~Wu and K.L. Yeung.
\newblock {OPN03-5: Light-Trail Assignment in WDM Optical Networks}.
\newblock In {\em IEEE Global Telecommunications Conference, GLOBECOM'06},
  pages 1--5, 2006.

\bibitem{balasubramanian2005network}
S.~Balasubramanian, A.E. Kamal, and A.K. Somani.
\newblock {Network design in IP-centric light-trail networks}.
\newblock In {\em 2nd International Conference on Broadband Networks, IEEE
  Broadnets'05}, pages 41--50, 2005.

\bibitem{lodha2007sol}
A.~Lodha, A.~Gumaste, P.~Bafna, and N.~Ghani.
\newblock {Stochastic Optimization of Light-trail WDM Ring Networks using
  Bender's Decomposition}.
\newblock In {\em Workshop on High Performance Switching and Routing, HPSR'07},
  pages 1--7, 2007.

\bibitem{zhang2005dynamic}
W.~Zhang, G.~Xue, J.~Tang, and K.~Thulasiraman.
\newblock {Dynamic light trail routing and protection issues in WDM optical
  networks}.
\newblock In {\em IEEE Global Telecommunications Conference, GLOBECOM'05},
  pages 1963--1967, 2005.

\bibitem{gumaste2006daa}
A.~Gumaste and S.Q. Zheng.
\newblock {Dual auction (and recourse) opportunistic protocol for light-trail
  network design}.
\newblock In {\em IFIP International Conference on Wireless and Optical
  Communications Networks}, page~6, 2006.

\bibitem{DBLP:journals/ijcsa/WankarA09}
Rajeev Wankar and Rajendra Akerkar.
\newblock Reconfigurable architectures and algorithms: A research survey.
\newblock {\em IJCSA}, 6(1):108--123, 2009.

\bibitem{bondalapati1997reconfigurable}
K.~Bondalapati and V.K. Prasanna.
\newblock {Reconfigurable meshes: Theory and practice}.
\newblock In {\em Fourth Workshop on Reconfigurable Architectures, IPPS}, 1997.

\bibitem{nakano1995bibliography}
K.~Nakano.
\newblock {A bibliography of published papers on dynamically reconfigurable
  architectures}.
\newblock {\em Parallel Processing Letters}, 5(1):111--124, 1995.

\bibitem{ben1991power}
Y.~Ben-Asher, D.~Peleg, R.~Ramaswami, and A.~Schuster.
\newblock {The power of reconfiguration}.
\newblock {\em Journal of parallel and distributed computing}, 13(2):139--153,
  1991.

\bibitem{rothstein1988bus}
J.~Rothstein.
\newblock {Bus automata, brains, and mental models}.
\newblock {\em IEEE Transactions on Systems, Man, and Cybernetics},
  18(4):522--531, 1988.

\bibitem{snyder1982introduction}
L.~Snyder.
\newblock {Introduction to the Configurable, Highly Parallel Computer.}
\newblock {\em Computer}, 15(1):47--64, 1982.

\bibitem{weems1989image}
C.C. Weems, S.P. Levitan, A.R. Hanson, E.M. Riseman, D.B. Shu, and J.G. Nash.
\newblock {The image understanding architecture}.
\newblock {\em International Journal of Computer Vision}, 2(3):251--282, 1989.

\bibitem{hao2002selection}
E.~Hao, PD~MacKenzie, and QF~Stout.
\newblock {Selection on the reconfigurable mesh}.
\newblock In {\em Fourth Symposium on the Frontiers of Massively Parallel
  Computation}, pages 38--45. IEEE, 1992.

\bibitem{lin1995reconfigurable}
R.~Lin and S.~Olariu.
\newblock {Reconfigurable Buses with Shift Switching: Concepts and
  Applications}.
\newblock {\em IEEE Transactions on Parallel and Distributed Systems},
  6(1):93--102, 1995.

\bibitem{trahan1996power}
J.L. Trahan, R.~Vaidyanathan, and R.K. Thiruchelvan.
\newblock {On the Power of Segmenting and Fusing Buses}.
\newblock {\em Journal of Parallel and Distributed Computing}, 34(1):82--94,
  1996.

\bibitem{sahni1995data}
S.~Sahni.
\newblock {Data manipulation on the distributed memory bus computer}.
\newblock {\em Parallel processing letters}, 5(1):3--14, 1995.

\bibitem{rajasekaran1993mesh}
S.~Rajasekaran.
\newblock {Mesh Connected Computers with Fixed and Reconfigurable Buses: Packet
  Routing, Sorting, and Selection}.
\newblock In {\em Proceedings of the First Annual European Symposium on
  Algorithms}, pages 309--320. Springer-Verlag, 1993.

\bibitem{maresca1993polymorphic}
M.~Maresca.
\newblock {Polymorphic Processor Arrays}.
\newblock {\em IEEE Transactions on Parallel and Distributed Systems},
  4(5):490--506, 1993.

\bibitem{li1989polymorphic}
H.~Li and M.~Maresca.
\newblock {Polymorphic-Torus Network}.
\newblock {\em IEEE Transactions on Computers}, 38(9):1345--1351, 1989.

\bibitem{jang1995optimal}
J.W. Jang and V.K. Prasanna.
\newblock {An optimal sorting algorithm on reconfigurable mesh}.
\newblock {\em Journal of Parallel and Distributed Computing}, 25(1):31--41,
  1995.

\bibitem{geréb1992efficient}
M.~Ger{\'e}b-Graus and T.~Tsantilas.
\newblock {Efficient optical communication in parallel computers}.
\newblock In {\em Proceedings of the fourth annual ACM symposium on Parallel
  algorithms and architectures}, pages 41--48. ACM, 1992.

\bibitem{pavel1996matrix}
S.~Pavel and S.G. Akl.
\newblock {Matrix operations using arrays with reconfigurable optical buses}.
\newblock {\em International Journal of Parallel, Emergent and Distributed
  Systems}, 8(3):223--242, 1996.

\bibitem{pan1998linear}
Yi~Pan and Keqin Li.
\newblock Linear array with a reconfigurable pipelined bus system---concepts
  and applications.
\newblock {\em Inf. Sci.}, 106(3-4):237--258, 1998.

\bibitem{miller1993parallel}
R.~Miller, VK~Prasanna-Kumar, D.I. Reisis, and Q.F. Stout.
\newblock {Parallel Computations on Reconfigurable Meshes}.
\newblock {\em IEEE Transactions on Computers}, 42(6):678--692, 1993.

\bibitem{li1999parallel}
K.~Li, Y.~Pan, and S.Q. Zheng.
\newblock {Parallel matrix computations using a reconfigurable pipelined
  optical bus}.
\newblock {\em Journal of Parallel and Distributed Computing}, 59(1):13--30,
  1999.

\bibitem{nakano1995prefix}
K.~Nakano.
\newblock {Prefix-sums algorithms on reconfigurable meshes}.
\newblock {\em Parallel processing letters}, 5(1):23--35, 1995.

\bibitem{subbaraman1993list}
CP~Subbaraman, J.L. Trahan, and R.~Vaidyanathan.
\newblock {List ranking and graph algorithms on the reconfigurable multiple bus
  machine}.
\newblock In {\em Parallel Processing, 1993. ICPP 1993. International
  Conference on}, volume~3, 1993.

\bibitem{wang2007efficient}
Y.R. Wang.
\newblock {An efficient O (1) time 3D all nearest neighbor algorithm from image
  processing perspective}.
\newblock {\em Journal of Parallel and Distributed Computing},
  67(10):1082--1091, 2007.

\bibitem{pan1998efficient}
Y.~Pan, M.~Hamdi, and K.~Li.
\newblock {Efficient and scalable quicksort on a linear array with a
  reconfigurable pipelined bus system}.
\newblock {\em Future Generation Computer Systems}, 13(6):501--513, 1998.

\bibitem{wang1990two}
B.F. Wang and G.H. Chen.
\newblock {Two-dimensional processor array with a reconfigurable bus system is
  at least as powerful as CRCW model}.
\newblock {\em Information Processing Letters}, 36(1):31--36, 1990.

\bibitem{li2000efficient}
K.~Li, Y.~Pan, and S.Q. Zheng.
\newblock {Efficient deterministic and probabilistic simulations of PRAMs on
  linear arrays with reconfigurable pipelined bus systems}.
\newblock {\em The Journal of Supercomputing}, 15(2):163--181, 2000.

\bibitem{suel1994routing}
Torsten Suel.
\newblock {Routing and sorting on meshes with row and column buses}.
\newblock In {\em Proceedings of the Eighth International Symposium on Parallel
  Processing}, pages 411--417. IEEE, 1994.

\bibitem{rajasekaran1997sorting}
S.~Rajasekaran and S.~Sahni.
\newblock {Sorting, selection, and routing on the array with reconfigurable
  optical buses}.
\newblock {\em IEEE Transactions on Parallel and Distributed Systems},
  8(11):1123--1132, 1997.

\bibitem{coffmanjr1996aab}
E.~G. Coffman, Jr., M.~R. Garey, and D.~S. Johnson.
\newblock Approximation algorithms for bin packing: a survey.
\newblock {\em Approximation algorithms for NP-hard problems}, pages 46--93,
  1997.

\bibitem{olariu1991optimal}
S.~Olariu.
\newblock {An optimal greedy heuristic to color interval graphs}.
\newblock {\em Information Processing Letters}, 37(1):21--25, 1991.

\end{thebibliography}

\end{document}